\definecolor{light-gray}{gray}{0.95}
\newcommand{\eps}{\varepsilon}
\newcommand{\etal}{\emph{et al.}\xspace}
\theoremstyle{plain}
\newcommand{\B}{\ensuremath{\mathcal{B}}}
\newcommand{\C}{\ensuremath{\mathcal{C}}}
\newcommand{\R}{\ensuremath{\mathcal{R}}}
\newcommand{\REAL}{\ensuremath{\mathbb{R}}}
\newcommand{\Reals}{\REAL}
\renewcommand{\leq}{\leqslant}
\renewcommand{\geq}{\geqslant}
\DeclarePairedDelimiter\floor{\lfloor}{\rfloor}
\DeclareMathOperator{\diam}{diam}
\DeclareMathOperator{\radius}{radius}
\DeclareMathOperator{\dist}{dist}
\newcommand{\diamz}{\diam_z}
\DeclareMathOperator{\cent}{center}
\DeclareMathOperator{\ball}{ball}
\newcommand{\opt}{\mbox{{\sc opt}}\xspace}
\newcommand{\myopt}{\mathrm{opt}}
\newcommand{\optkz}{\opt_{k,z}}
\newcommand{\NO}{{\sc no}\xspace}
\newcommand{\ts}{\tau}
\newcommand{\et}{t_{\mathrm{exp}}}
\newcommand{\Texp}{T_{\mathrm{exp}}}
\newcommand{\at}{t_{\mathrm{arr}}}
\newcommand{\sketch}{\Gamma}
\newcommand{\answer}{\mathit{answer}}
\newcommand{\myin}{\mathrm{in}}
\newcommand{\myout}{\mathrm{out}}
\newcommand{\Pin}{P_{\myin}}
\newcommand{\Pout}{P_{\myout}}
\newcommand{\alg}{{\sc Alg}\xspace}
\newcommand{\dmax}{\Delta_{\max}}
\newcommand{\dmin}{\Delta_{\min}}
\newcommand{\ropt}{\rho_{\myopt}}
\title{$k$-Center Clustering with Outliers in the Sliding-Window Model}
\author{Mark de Berg}{Department of Mathematics and Computer Science, TU Eindhoven, the Netherlands}{M.T.d.Berg@tue.nl}{}{}
\author{Morteza Monemizadeh}{Department of Mathematics and Computer Science, TU Eindhoven, the Netherlands}{M.Monemizadeh@tue.nl}{}{}
\author{Yu Zhong}{Department of Mathematics and Computer Science, TU Eindhoven, the Netherlands}{Y.Zhong@student.tue.nl}{}{}
\authorrunning{M.~de Berg, M.~Monemizadeh and Y.~Zhong} %mandatory. First: Use abbreviated first/middle names. Second (only in severe cases): Use first author plus 'et. al.'
\keywords{Streaming algorithms, $k$-center problem, sliding window, bounded doubling dimension}% mandatory: Please provide 1-5 keywords
\begin{document}
\maketitle
%------------------------------------------------------------------------------------------

%------------------------------------------------------------------------------------------
\begin{abstract}
The $k$-center problem for a point set~$P$ asks for a collection of $k$ congruent balls
(that is, balls of equal radius) that together cover all the points in~$P$ and whose 
radius is minimized. The $k$-center problem with outliers is defined similarly, except 
that $z$ of the points in $P$ do need not to be covered, for a given parameter~$z$. We study the
$k$-center problem with outliers in data streams in the sliding-window model.
In this model we are given a possibly infinite stream 
$P=\langle p_1,p_2,p_3,\ldots\rangle$ of points and a time window of length~$W$,
and we want to maintain a small sketch of the set $P(t)$ of points currently 
in the window such that using the sketch we can approximately solve the 
problem on~$P(t)$.

We present the first algorithm for the $k$-center problem with outliers in the 
sliding-window model. The algorithm  works for the case where the points come from a
space of bounded doubling dimension and it maintains a set $S(t)$ such that
an optimal solution on $S(t)$ gives a $(1+\eps)$-approximate solution on $P(t)$.
The algorithm is deterministic and uses $O((kz/\eps^d)\log \sigma)$ storage, where $d$ is the doubling
dimension of the underlying space and $\sigma$ is the spread of the points in the stream.
Algorithms providing a $(1+\eps)$-approximation were not even known in the setting
without outliers or in the insertion-only setting with outliers.
We also present a lower bound showing that any algorithm that provides a $(1+\eps)$-approximation
must use $\Omega((kz/\eps)\log \sigma)$ storage.
\end{abstract}
%------------------------------------------------------------------------------------------

%------------------------------------------------------------------------------------------
\section{Introduction}
%------------------------------------------------------------------------------------------
Clustering is one of the most important tools to analyze large data sets.
A well-known class of clustering algorithms is formed by centroid-based algorithms, which include
$k$-means clustering, $k$-median clustering and $k$-center clustering. The latter type
of clustering is the topic of our paper. In the $k$-center problem one is given a set $P$ of points 
from a metric space and a parameter~$k$, and the goal is to find $k$ congruent 
balls (that is, balls of equal radius) that together 
cover the points from~$P$ and whose radius is minimized. Note that the special 
case $k=1$ corresponds to the minimum-enclosing ball problem. 
% The $k$-center problem, including the special cases for $k=1$ and $k=2$, has been studied extensively, in particular  in the Euclidean setting. 
Data sets in practice often contain outliers, leading to the
\emph{$k$-center problem with outliers}. Here we are given, besides $P$ and $k$,
a parameter $z$ that indicates the allowed number of outliers. 
Thus the radius of the balls in an optimal solution is given by
\begin{quotation}
$\optkz(P)$ := \begin{minipage}[t]{100mm}
             the smallest radius $\rho$ such that we can cover all points from $P$,
             except for at most $z$ outliers, by $k$ balls of radius~$\rho$.
             \end{minipage}
\end{quotation}

In this paper we study the $k$-center problem with outliers in data streams, where the input is a possibly
infinite stream $P=\langle p_1,p_2,\ldots\rangle$ of points. The goal is to maintain a solution
to the $k$-center problem as the points arrive over time, without any knowledge of future
arrivals and using limited (sub-linear) storage. Since we cannot store all the points in the stream, 
we cannot expect to maintain an optimal solution. Hence, the two main quality criteria of
a streaming algorithm are its approximation ratio and the amount of storage it uses.
We will study this problem in the \emph{sliding-window model}. In this model we are given a window length~$W$ and we are, at any time $t$, only interested in the points that arrived in the time window~$(t-W,t]$.
Working in the sliding-window model is often significantly more difficult than working in
the standard (insertion-only) streaming model.

%------------------------------------------------------------------------------------------
\paragraph*{Previous work}
%------------------------------------------------------------------------------------------
Charikar~\etal\cite{DBLP:journals/siamcomp/CharikarCFM04} 
were the first to study the metric $k$-center problem in data streams. 
They developed an algorithm in the insertion-only model that computes 
an $8$-approximation for the $k$-center problem using $\Theta(k)$ space.  
Later McCutchen and Khuller~\cite{DBLP:conf/approx/McCutchenK08}
improved the approximation ratio to~$2+\eps$ at the cost of increasing
the storage to $O((k/\eps)\log(1/\eps))$.
McCutchen and Khuller also studied the $k$-center problem with $z\geq 1$ outliers, 
for which they gave a $(4 + \eps)$-approximation algorithm 
that requires $O(kz/\eps)$ space.

The above results are for general metric spaces.  
In spaces of \emph{bounded doubling dimension}\footnote{The \emph{doubling dimension}
of a space $X$ is the smallest number~$d$ such that any ball~$B$ in the space can be covered
by $2^d$ balls of radius $\radius(B)/2$.} better bounds are possible. Indeed,
Ceccarello~\etal~\cite{DBLP:journals/pvldb/CeccarelloPP19} 
gave a $(3+\eps)$-approximation algorithm for the $k$-center problem with $z$ outliers,
thus improving the approximation ratio~$(4+\eps)$ for general metrics. Their algorithm
requires $O((k+z)(1/\eps)^d)$ storage, where $d$ is the doubling dimension of 
the underlying space (which is assumed to be a fixed constant).

%  using the idea of \emph{composable coresets} they developed a deterministic $1$-pass streaming algorithm 
% that requires $O(k+z)(1/\eps)^d$ space and approximates the $k$-center problem with $z$ outliers 
% to within $(3 + \eps)$-factor. 
%In the MapReduce model they developed deterministic $2$-round, $(2 + \eps)$-approximation algorithm 
%for the $k$-center problem that requires  $O(\sqrt{|S|k}(1/\eps)^d)$ space per machine. 
%For the $k$-center problem with $z$ outliers they devised 
%a deterministic $2$-round, $(3 + \eps)$-approximation MapReduce algorithm 
%with $O(\sqrt{|S|(k+\log|S|)+z}(1/\eps)^d)$ space per machine. 
%For the $k$-center problem with $z$ outliers in the streaming model, 
%they showed a deterministic $1$-pass, $(3 + \eps)$-approximation algorithm 
%which requires $O(k+z)(1/\eps)^d$ space.

%\textbf{$k$-center with relaxed outliers.} 
The algorithms mentioned so far are deterministic. 
Charikar~\etal~\cite{DBLP:conf/stoc/CharikarOP03} 
and Ding~\etal~\cite{DBLP:conf/esa/DingYW19} studied sampling-based
streaming algorithms for the Euclidean $k$-center problem with outliers,
showing that if one allows slightly more than $z$~outliers then
randomization can reduce the storage requirements. Our focus, however,
is on deterministic algorithms.
%
%for the $k$-center clustering with outliers in Euclidean space $\Reals^d$ of much smaller space. 
%Indeed, suppose that we are given a point set $P \subseteq \Reals^d$, 
%and we sample a subset $S \subseteq P$  of size $\tilde{O}(\frac{1}{\eps^2\gamma}\cdot kd)$ uniformly
%at random and solve $k$-center clustering with $z'=(1+\eps)\gamma|S|$ outliers on $S$. 
%Then, if $X$ is an $\alpha$-approximate solution of the $k$-center problem with $(1+\eps)z'$ 
%for $S$, then $X$ is an $\alpha$-approximate solution of the $k$-center problem 
%with $(1+c\eps)z$ for $P$, with a constant probability for a constant $c$.
\medskip

For the $k$-center problem in the sliding-window model, the only result we are 
aware of is due to Cohen{-}Addad~\etal~\cite{DBLP:conf/icalp/Cohen-AddadSS16}. 
They deal with the $k$-center problem in general metric spaces, but without outliers,  
and they propose a $(6 + \eps)$-approximation algorithm using 
$O((k/\eps) \log \sigma)$ storage, 
and a $(4 + \eps)$-approximation for the special case $k = 2$. 
Here $\sigma$ denotes (an upper bound on) the spread of the points in the stream;
in other words, $\sigma := \dmax/\dmin$, where $\dmax$ is an upper bound on the maximum distance
and $\dmin$ is a lower bound on the minimum distance between any two distinct points.
It is assumed that $\dmax$ and $\dmin$ are known to the algorithm.
They also prove that any algorithm for the $2$-center problem with outliers in general metric
spaces that achieves an approximation ratio of less than~$4$ requires $\Omega(W^{1/3})$ space, 
where $W$ is the size\footnote{Here the window size $W$ is defined in terms of 
the number of points in the window, that is, the window consists of the $W$ 
most recent points. We define the window in a slightly more general manner,
by defining $W$ to be the length (that is, duration) of the window. Note that 
if we assume that the $i$-th point arrives at time $t=i$, then the two
models are the same.} of the window.
Table~\ref{table:streaming_MM_size} gives an overview of the known results
on the $k$-center problem in the insertion-only and the sliding-window model.
%------------------------------------------------------------------------------------------
\begin{table*}[t]
\begin{center}
\begin{tabular}{c|ccccc}
%\hline
% \multicolumn{4}{c}{Model} \\
%\cline{2-3}
model & metric space  &  approx. & storage  & outliers & ref. \\
\hline
insertion-only & general &   $8$ & $k$  &  no & \cite{DBLP:journals/siamcomp/CharikarCFM04} \\
        & general &    $2+\eps$ & $(k/\eps)\log(1/\eps)$  &  no & \cite{DBLP:conf/approx/McCutchenK08} \\     
        & general &  $4+\eps$ & $kz/\eps$  &  yes & \cite{DBLP:conf/approx/McCutchenK08}  \\     
\cline{2-6}
        & bounded doubling &   $3+\eps$ & $(k+z)/\eps^d$  &  yes & \cite{DBLP:journals/pvldb/CeccarelloPP19} \\     
\hline
sliding window & general  & $6+\eps$ & $(k/\eps)\log \sigma$  &  no  & \cite{DBLP:conf/icalp/Cohen-AddadSS16}  \\     
\cline{2-6}
& bounded doubling                            &     $1+\eps$ & $(kz/\eps^d)\log \sigma$ & yes    & here       \\
%\hline
\end{tabular}
\end{center}
\caption{Results for the $k$-center problem with and without outliers in the insertion-only
and the sliding-window model. Bounds on the storage are asymptotic. In the papers where the metric space has 
bounded doubling dimension or is Euclidean, the dimension~$d$ is considered a constant.}
\label{table:streaming_MM_size}
\end{table*}
\medskip

While our main interest is in the $k$-center problem for $k>1$, our results also
apply when~$k=1$. Hence, we also
briefly discuss previous results for the 1-center problem.

For the 1-center problem in $d$-dimensional Euclidean space,
streaming algorithms that maintain an $\eps$-kernel  give a $(1+\eps)$-approximation. 
An example is the algorithm of Zarabi-Zadeh~\cite{z-cpa-08} which maintains 
an $\eps$-kernel of size $O(1/\eps^{(d-1)/2}\log(1/\eps))$.
Moreover, using only $O(d)$ storage one can obtain a 1.22-approximation
for the 1-center problem without outliers~\cite{DBLP:journals/algorithmica/AgarwalS15,cp-sdameb-14}.
For the 1-center problem with outliers, one can obtain 
a $(1+\eps)$-approximation algorithm that uses $z/\eps^{O(d)}$ storage
by the technique of Agarwal~\etal~\cite{DBLP:journals/dcg/AgarwalHY08}. 
% and Har{-}Peled and Wang~\cite{DBLP:journals/siamcomp/Har-PeledW04} 
% introduced the idea of \emph{robust kernel} using which they 
%   \mdb{Both papers have the same result? Rephrase as: "using the techniques from ... one can obtain ...?}
Zarrabi{-}Zadeh and Mukhopadhyay ~\cite{DBLP:conf/cccg/Zarrabi-ZadehM09} studied the 
$1$-center problem with $z$~outliers in high-dimensional Euclidean spaces,
where $d$ is not considered constant, giving a $1.73$-approximation algorithm that requires $O(d^3z)$ storage. 
Recently, Hatami anad Zarrabi{-}Zadeh~\cite{DBLP:journals/comgeo/HatamiZ17} 
extended this result to $2$-center problem with $z$ outliers, obtaining
a $(1.8 + \eps)$-approximation using $O(d^3z^2+dz^4/\eps)$ storage. 
%This improves over the  $(4+\eps)$-approximation algorithm due to 
%McCutchen and Khuller~\cite{DBLP:conf/approx/McCutchenK08} for general
%metric spaces (and which also works for $k>2$).
None of the 1-center algorithms discussed above works in the sliding-window model.

A problem that is closely related to the 1-center problem is the
\emph{diameter problem}, where the goal is to maintain an approximation
of the diameter of the points in the stream. This problem has been
studied in the sliding-window model by Feigenbaum, Kannan, Zhang~\cite{DBLP:journals/algorithmica/FeigenbaumKZ04} and later
by Chan and Sadjad~\cite{cs-gosw-06}, who gave a $(1+\eps)$-approximation
for the diameter problem (without outliers) in the sliding window model, 
using $O((1/\eps)^{(d+1)/2}\log(\sigma/\eps))$ storage.

%------------------------------------------------------------------------------------------
\paragraph*{Our contribution}
%------------------------------------------------------------------------------------------
We present the first algorithm for the $k$-center problem with $z$ outliers in the sliding-window model.
It works in spaces of bounded doubling dimension and yields a $(1+\eps)$-approximation.
So far a $(1+\eps)$-approximation was not even known for the $k$-center problem
without outliers in the insertion-only model.  Our algorithm uses
$O((kz/\eps^d)\log \sigma)$ storage,\footnote{To correctly state the bound for the case $z=0$ as well,
we should actually write $O((k(z+1)/\eps^d)\log \sigma)$. To keep the bound concise, we will
just write $O((kz/\eps^d)\log \sigma)$, however.}
% \footnote{Since the solution also works in the case without outliers, it is more accurate to write $O((k(z+1)/\eps^d)\log \sigma)$, but for simplicity we write $O((kz/\eps^d)\log\sigma)$}
where $d$ is the doubling dimension and $\sigma$ is the spread,
as defined above. Thus for the 1-center problem we obtain a solution that uses 
$O((z/\eps^d)\log \sigma)$ storage. This solution also works for the diameter
problem. Note that also for the 1-center problem 
with outliers (and the diameter problem with outliers) an algorithm
for the sliding-window model was not yet known.
A useful property of the 
sketch\footnote{We use the word \emph{sketch} even though we do not study how to compose the sketches 
for two separate streams into a sketch for the concatenation of the stream, as
the term \emph{sketch} seems more appropriate than \emph{data structure}, for example.} 
maintained by our algorithm for the $k$-center problem, is that it can
also be used for the $k'$-center problem for any $k'< k$, as well as for
the diameter problem. 
As in the previous papers on the $k$-center problem (or the diameter problem) in the sliding-window 
model~\cite{cs-gosw-06,DBLP:conf/icalp/Cohen-AddadSS16,DBLP:journals/algorithmica/FeigenbaumKZ04}, we assume
that $\dmax$ and $\dmin$ (upper and lower bounds on the maximum and minimum distance between
any two points in the stream) are known to the algorithm. 
% A typical example is when the input consists of points in $d$-dimensional Euclidean space with 
% integer coordinates in a given range~$\{0,1,\ldots,U\}$. In this example, the spread is~$\Theta(U)$.

As mentioned above, our algorithm provides a $(1+\eps)$-approximation. More precisely,
it maintains a set $S(t)\subset P(t)$ such that computing an optimal solution for~$S(t)$ and 
then suitably expanding the balls in the solution, gives a $(1+\eps)$-approximate solution for~$P(t)$.
Thus, to obtain a $(1+\eps)$-approximate solution for $P(t)$ one needs to compute an optimal 
(or $(1+\eps')$-approximate, for a suitable~$\eps'$) solution for~$S(t)$. Depending on the 
underlying space this can be slow, and it may be preferable to compute, say, a 2-approximation for $S(t)$,
which would then give a $(2+\eps)$-approximation for~$P(t)$. The various options in Euclidean spaces
and general spaces of bounded doubling dimension are discussed in Section~\ref{subse:report-solution}.

Our second contribution is a lower bound for the $k$-center problem with outliers in the sliding-window model. 
% Recall that Ceccarello~\etal~\cite{ceccarello2018solving} provide a
% $(3+\eps)$-approximation using $O((k+z)/\eps^d)$ storage in the insertion-only model.
% MdB: Should perhaps compare with the Ceccarello paper, but then we need to check in which model it works.
Our lower bound shows that any algorithm that provides a $(1+\eps)$-approximation must
use $\Omega((kz/\eps)\log\sigma)$ storage. This matches our upper bound up to the dependency on~$\eps$ and~$d$. 
The lower-bound construction uses points in~$\Reals^1$, so it shows that our algorithm is optimal in this case.
The lower bound model is very general. It allows the algorithm to store points,
or weighted points, or balls, or whatever it wants so that it can approximate an optimal
solution for $P(t)$ at any time~$t$. The main condition is that each object comes with an expiration time
which corresponds to the expiration time of some point in the stream, and that the algorithm
can only change its state when an object expires or a new object arrives; see Section~\ref{se:lower-bound}
for details.

%------------------------------------------------------------------------------------------
\section{The algorithm}
%------------------------------------------------------------------------------------------
Let $P:= \langle p_1,p_2,\ldots \rangle$ be a possibly infinite stream of points from
a metric space~$X$ of doubling dimension~$d$ and spread~$\sigma$, where~$d$ is considered to
be a fixed constant. We denote the arrival time
of a point~$p_i$ by $\at(p_i)$. We say that $p_i$ \emph{expires} at 
time~$\et(p_i) := \at(p_i) + W$, where $W$ is the given length of the time window.
To simplify the exposition, we assume that all arrival times and departure times
(that is, times at which a point expires) are distinct.
For a time~$t$ we define $P(t)$ to be the set\footnote{We allow the same point from $X$ 
to occur multiple times in the stream, so $P(t)$ is actually a multi-set. Whenever we refer 
to ``sets'' in the remainder of the paper we mean ``multi-sets''. The distance $\dmin$
is defined with respect to distinct points, however.}
of points currently in the window. In other words, $P(t) := \{ p_i : \at(p_i) \leq t < \et(p_i) \}$.
For a point $q\in X$ and a parameter~$r\geq 0$, we use  $\ball(q,r)$ to denote the ball with center~$q$ 
and radius~$r$.

In the following we show how to maintain a sketch~$\sketch(t)$ of $P(t)$ for the $k$-center problem 
with outliers. The idea behind our algorithm is as follows. Consider an optimal solution for~$P(t)$
consisting of $k$ balls $B_1,\ldots,B_k$ of radius~$\ropt$. Suppose we cover the points in each ball~$B_i$ 
by a number of smaller balls of radius~$\eps\ropt$; we will call these \emph{mini-balls}.
% See Figure~\ref{fi:mini-balls}. 
%------------------------------------------------------------------------------------------
%\begin{figure}
%\begin{center}
%\includegraphics{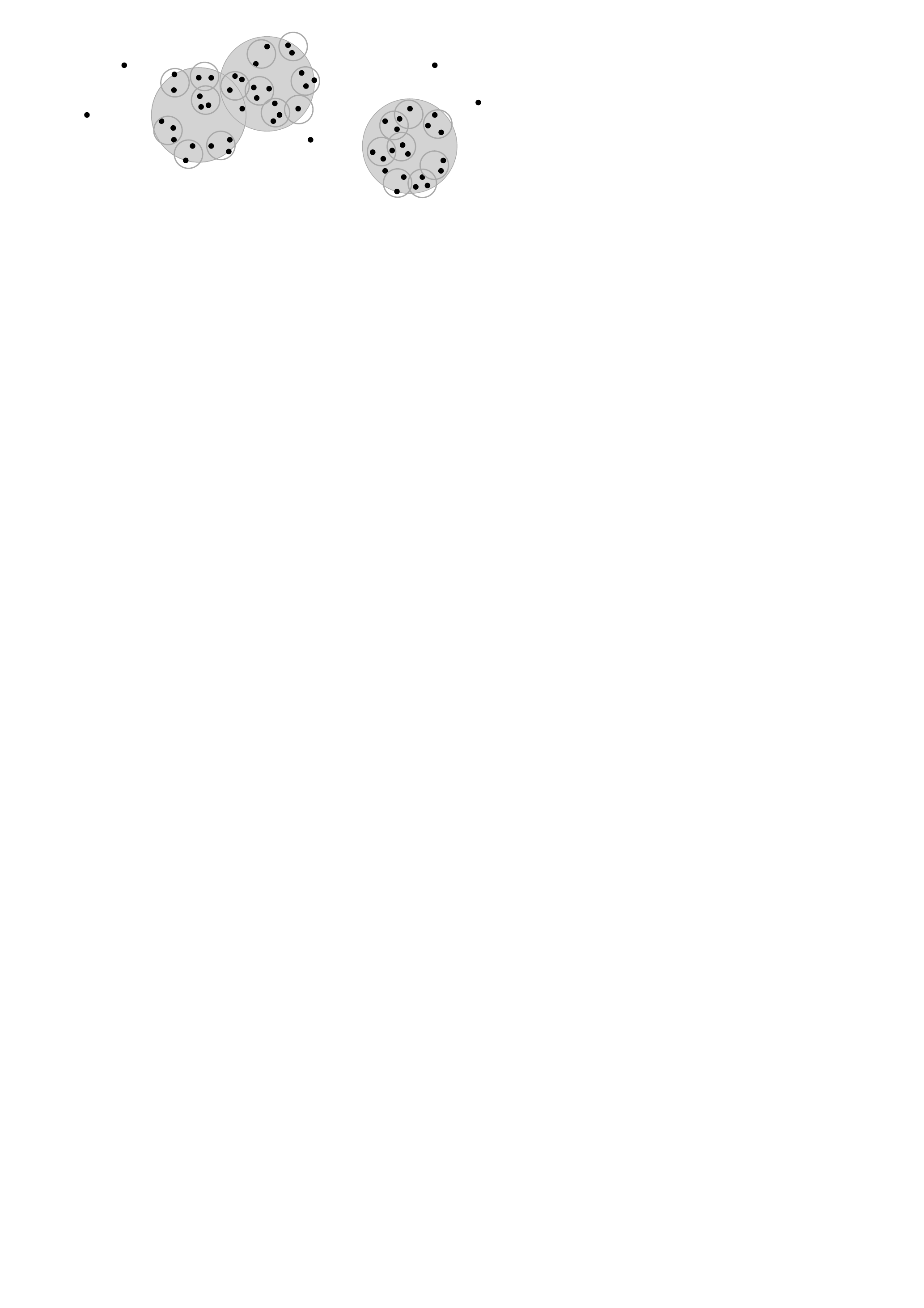}
%\end{center}
%\caption{A set of $k=3$ balls of radius $\rho$ whose points are covered by mini-balls of radius~$\eps\rho$.
%\mdb{Figure does not seem very useful.}}
%\label{fi:mini-balls}
%\end{figure}
%------------------------------------------------------------------------------------------
Since the underlying space has doubling dimension~$d$,
we can do this using $O(1/\eps^d)$ mini-balls for each ball~$B_i$. To obtain a $(1+\eps)$-approximation, 
we do not need to know all the points: it suffices to keep $z+1$ points in each mini-ball,
since then we cannot designate all points in the mini-ball as outliers. There are several challenges
to overcome to make this idea work. For instance, we do not know~$\ropt$. Hence,
we will develop an algorithm for the decision version of the problem, which we will then 
apply ``in parallel'' to different possible values for the optimal radius.
But when $\rho$ is much smaller than $\ropt$ we have another challenge, namely that there
can be many outliers (which may need to be stored, because at some later moment
in time they may become inliers). Finally, since points arrive and expire, the balls in
an optimal solution move over time, which makes it hard to maintain the mini-balls 
correctly. Below we show how to overcome these challenges.

%------------------------------------------------------------------------------------------
\subsection{A sketch for the decision problem}
%------------------------------------------------------------------------------------------
Recall that for a set $Q$ of points, $\optkz(Q)$ denotes the smallest radius $\rho$ such that 
we can cover all points from $Q$, except for at most $z$ outliers, by $k$ balls of radius $\rho$,
and that $\dmin$ and $\dmax$ are bounds on the minimum and maximum distance between any two 
points in the stream. Let $\rho$ be a parameter with $\dmin \leq \rho \leq \dmax$.
We will design a sketch $\sketch(t)$ and a corresponding decision
algorithm \textsc{TryToCover} with the following properties:

\begin{itemize}
\item The sketch $\sketch(t)$ uses $O(kz/\eps^d)$ storage.
\item \textsc{TryToCover} either reports a collection $\C^*$ of $k$ balls 
      of radius $\optkz(P(t))+2\eps\rho$ that together cover all points in $P(t)$
      except for at most $z$ outliers, or it reports~\NO. In latter case we have $\optkz(P(t))> 2\rho$.
\end{itemize}

\noindent Our sketch $\sketch(t)$ is a tuple $[\ts(t),\B(t), \R(t)),\Pout(t)]$,
where $\ts$ is a \emph{timestamp}, $\B(t)$ is a collection of mini-balls, 
$\R(t)$ is a collection of so-called \emph{representative sets},
and $\Pout(t)$ is a set of outliers. We will maintain the following invariants.
\\
\begin{description}
\item[(Inv-1)] For any time $t'$ with
      $t\leq t'<\ts(t)$ we are guaranteed that $\optkz(P(t'))> 2 \rho$. The idea is that
      when we find $k+z+1$ points with pairwise distances greater~$2\rho$,
      then we can set $\tau$ to the expiration time of the oldest of these points,
      and we can delete this point (as well as any older points) from the sketch.
      % $2\rho$ because of how we use it for the optimization version of the problem
\item[(Inv-2)] 
      Each mini-ball $B\in\B(t)$ has radius~$\eps \rho$ and the center of $B$,
      denoted by $\cent(B)$, is a point from the stream.
      Note that the center does not need to be a point from $P(t)$.
      The mini-balls in $\B(t)$ are \emph{well spread} in the sense that 
      no mini-ball contains the center of any other mini-ball. In other words, we have
      $\dist(\cent(B),\cent(B'))> \eps\rho$ for any two mini-balls $B,B'\in\B(t)$.
\item[(Inv-3)] For each mini-ball $B\in \B(t)$ the set $\R(t)$ contains a \emph{representative set} 
      $R(B) \subseteq B \cap P(t)$, and these are the only sets in $\R(t)$.
      The representative sets $R(B)$ are pairwise disjoint, and each set $R(B)$
      contains at most~$z+1$ points. When $|R(B)|=z+1$, we say that the mini-ball~$B$ is \emph{full}.
\item[(Inv-4)]       
      Define $\Pin(t) := \bigcup_{B\in\B(t)} R(B)$ and let $S(t) := \Pin(t) \cup \Pout(t)$
      be the collection of all points in our sketch. 
      For any point $p_i \in P(t)\setminus S(t)$---these are exactly the points that 
      have not yet expired but that have been discarded by the algorithm---we have 
       (i) $\et(p_i) \leq \ts(t)$, and/or 
       (ii) $p_i\in B$ for a mini-ball $B\in\B(t)$ that is full and such that all points
            in $R(B)$ arrived after~$p_i$.
\item[(Inv-5)] $|\B(t)|=O(k/\eps^d)$ and $|\Pout(t)|\leq z$. \\
\end{description}
At time $t=0$, before the arrival of the first point, we have $\ts(t)=0$ and $\B(t)=\R(t)=\Pout(t)=\emptyset$.
Since $P(0)=\emptyset$ this trivially satisfies the invariants.
Before we prove that we can maintain the sketch upon the arrival and departure of points, 
we present our decision algorithm \textsc{TryToCover} and prove its correctness. 
\pagebreak
%The algorithm is quite simple, and given in Algorithm~\ref{alg:TryToCover}.
%------------------------------------------------------------------------------------------
\begin{algorithm}[h] 
\caption{\textsc{TryToCover}$(\sketch(t))$} \label{alg:TryToCover}
\begin{algorithmic}[1]
\State $S(t) \gets \bigcup_{B\in\B(t)} R(B)$ 
\State Compute $\optkz(S(t))$ and the corresponding collection $\C := \{C_1,\ldots,C_k\}$ of balls. \label{li:compute-opt}
\If{$t < \ts(t)$ or $\optkz(S(t)) > 2\rho$}
    \State Report \NO
\Else
    \State Increase the radius of each ball $C_i\in\C$ by $2\eps\rho$. \label{li:expand}
    \State Report the collection $\C^* := \{ C^*_1,\ldots,C^*_k\}$ of expanded balls. \label{li:report-expanded}
\EndIf
\end{algorithmic}
\end{algorithm}
%------------------------------------------------------------------------------------------

In line~\ref{li:compute-opt} we compute an optimal solution on the point set~$S(t)$
How this can be done depends on the underlying metric space and will be discussed in Section~\ref{subse:report-solution}.
The following lemma establishes the correctness of the algorithm.
%------------------------------------------------------------------------------------------
\begin{lemma} \label{le:correctness}
Algorithm {\rm \textsc{TryToCover}} either reports a collection $\C^*$ of $k$ balls 
of radius $\optkz(P(t))+2\eps\rho$ that together cover all points in $P(t)$ except
for at most $z$ outliers, or it reports~\NO. 
In latter case we have $\optkz(P(t))> 2\rho$.
\end{lemma}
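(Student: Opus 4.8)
The plan is to handle the two possible output cases of \textsc{TryToCover} separately, using the invariants (Inv-1)--(Inv-5) to relate $\optkz(S(t))$ to $\optkz(P(t))$.

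\medskip

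\textbf{Case 1: the algorithm reports \NO.}
This happens when $t<\ts(t)$ or when $\optkz(S(t))>2\rho$. If $t<\ts(t)$, then applying (Inv-1) with $t'=t$ gives $\optkz(P(t))>2\rho$ immediately. If instead $\optkz(S(t))>2\rho$, I use the fact that $S(t)\supseteq \Pin(t)$, but more importantly I need the reverse: since $S(t)\subseteq P(t)$ (all representative sets and outliers are drawn from $P(t)$ by (Inv-3) and the definition of $\Pout(t)$), any solution covering $P(t)$ with at most $z$ outliers also covers $S(t)$ with at most $z$ outliers, hence $\optkz(S(t))\leq \optkz(P(t))$. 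Therefore $\optkz(P(t))\geq\optkz(S(t))>2\rho$, as claimed. Note that in line~\ref{li:compute-opt} the algorithm uses $S(t)=\bigcup_{B\in\B(t)}R(B)=\Pin(t)$ rather than $\Pin(t)\cup\Pout(t)$; I should double-check this point against the intended definition, but in either reading the inclusion $S(t)\subseteq P(t)$ holds and the argument goes through.

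\medskip

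\textbf{Case 2: the algorithm reports $\C^*$.}
Here $t\geq\ts(t)$ and $\optkz(S(t))\leq 2\rho$, and $\C=\{C_1,\dots,C_k\}$ is an optimal solution for $S(t)$, so the balls $C_i$ have radius $\optkz(S(t))\leq\optkz(P(t))$ (using $\optkz(S(t))\leq\optkz(P(t))$ again) and they cover all of $S(t)$ except at most $z$ points; the expanded balls $C^*_i$ have radius $\optkz(S(t))+2\eps\rho$. The crux is to show that the $C^*_i$ cover all of $P(t)$ except at most $z$ outliers. Take any point $p_i\in P(t)$. If $p_i\in S(t)$, then $p_i$ is covered by some $C_j$ (hence by $C^*_j$) unless $p_i$ is one of the $\leq z$ outliers of the optimal solution for $S(t)$. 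If $p_i\in P(t)\setminus S(t)$, I invoke (Inv-4): either $\et(p_i)\leq\ts(t)\leq t$, which contradicts $p_i\in P(t)$; or $p_i$ lies in a full mini-ball $B\in\B(t)$ all of whose $z+1$ representative points arrived after $p_i$ and hence are still in $P(t)$, so $R(B)\subseteq S(t)$. Since $B$ has radius $\eps\rho$ by (Inv-2), every point of $R(B)$ is within distance $2\eps\rho$ of $p_i$. Now if some point of $R(B)$ is covered by a ball $C_j$, then $p_i$ is covered by $C^*_j$ (radius expanded by $2\eps\rho$). The only way $p_i$ fails to be covered is if all $z+1$ points of $R(B)$ are outliers of the optimal solution for $S(t)$ — but that is impossible, since there are at most $z$ such outliers. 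So every point of $P(t)\setminus S(t)$ is covered, and the only uncovered points of $P(t)$ are among the $\leq z$ outliers of $\C$ restricted to $S(t)$. Finally, I observe $\optkz(S(t))+2\eps\rho\leq\optkz(P(t))+2\eps\rho$, so the reported balls have radius at most $\optkz(P(t))+2\eps\rho$ (and one can pad them if equality is desired in the statement).

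\medskip

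\textbf{The main obstacle} I anticipate is the bookkeeping in Case 2 around discarded points: making sure that when $p_i\in P(t)\setminus S(t)$ and (Inv-4)(ii) applies, the full mini-ball's representative set genuinely lies inside $S(t)$ at time $t$ — this needs the clause ``all points in $R(B)$ arrived after $p_i$'' together with $p_i\in P(t)$ to conclude the representatives have not expired. The rest is triangle-inequality arithmetic with the $\eps\rho$ mini-ball radius and the $2\eps\rho$ expansion. A minor subtlety is the slight mismatch between ``radius exactly $\optkz(P(t))+2\eps\rho$'' in the statement and ``radius at most $\optkz(S(t))+2\eps\rho$'' from the construction; this is resolved by noting radii can only help when enlarged, so reporting balls of radius exactly $\optkz(P(t))+2\eps\rho$ is still valid.
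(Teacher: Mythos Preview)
Your proof is correct and follows essentially the same approach as the paper's: the \NO\ case uses (Inv-1) and $S(t)\subseteq P(t)$, and the reporting case rules out alternative~(i) of (Inv-4) via $t\geq\ts(t)$ and then uses the full mini-ball together with the $2\eps\rho$ expansion. Your side observation about the mismatch between $S(t)=\Pin(t)$ in line~1 of \textsc{TryToCover} and $S(t)=\Pin(t)\cup\Pout(t)$ in (Inv-4) is perceptive; the paper's own proof implicitly treats $S(t)$ as including $\Pout(t)$, and indeed with the narrower reading (Inv-4) would not directly cover points of $\Pout(t)$, so your caveat is well placed.
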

%------------------------------------------------------------------------------------------
\begin{proof}
First suppose the algorithm reports~\NO. If this happens because $t<\ts(t)$ then 
$\optkz(P(t))> 2\rho$ by~(Inv-1). Otherwise, this happens because $\optkz(S(t)) > 2\rho$.
But then $\optkz(P(t))> 2\rho$, because (Inv-3) implies that $S(t) \subseteq P(t)$.

Now suppose the algorithm reports a collection $\C^*:=\{C^*_1,\ldots,C^*_k\}$ of balls.
Let $\C$ be the corresponding set of balls before they were expanded.
Since $\C$ is an optimal solution for $S(t)$, the balls $C_i$ have radius 
$\optkz(S(t))\leq \optkz(P(t))$ and together they cover all points in~$S(t)$
except for at most $z$ outliers. 
Now consider a point $p_i\in P(t)\setminus S(t)$. To finish the proof, we must show
that $p_i$ is covered by one of the balls in~$\C^*$.
To this end, first observe that $\et(p_i)>t$ because $p_i\in P(t)$. Since \textsc{TryToCover}
did not report~\NO, this implies that $\et(p_i)>\ts(t)$. 
Hence, we can conclude from (Inv-4) that $p_i\in B$ for a mini-ball $B\in \B(t)$ that is full.
Thus $R(B)$ contains $z+1$ points, and since we allow only $z$ outliers this implies
that at least one point from $R(B)$ is covered by a ball $C_i\in \C$. Because
$\diam(B)=2\eps\rho$, this implies that $p_i$ must be covered by~$C^*_i$, thus finishing the
proof.
\end{proof}
%------------------------------------------------------------------------------------------
Next we show how to update the sketch~$\sketch(t)$.

%------------------------------------------------------------------------------------------
\paragraph*{Handling departures}
%------------------------------------------------------------------------------------------
When a point $p_j$ in one of our representative sets $R(B)$
expires, we simply delete it from $R(B)$, and if $R(B)$ then becomes empty we remove
$R(B)$ from $\R(t)$ and $B$ from $\B(t)$. Similarly, if $p_j$ was a point in $\Pout(t)$
we remove $p_j$ from $\Pout(t)$.

It is trivial to verify that (Inv-1)--(Inv-3) and (Inv-5)
still hold for the updated sketch. To see that (Inv-4) holds as well, consider a point
$p_i\in P(t)\setminus S(t)$. The only reason for (Inv-4) to be violated, would be when
$p_i\in B$ for a mini-ball $B$ that was full before the deletion of $p_j$ but is no longer
full after the deletion. However, (Inv-4) states that all points in $R(B)$ arrived
after~$p_i$. Since $p_i$ did not yet expire, this means that the point~$p_j$ that currently
expires cannot be a point from $R(B)$. 

%------------------------------------------------------------------------------------------
\paragraph*{Handling arrivals}
%------------------------------------------------------------------------------------------
The arrival of a point $p_j$ at time $t:= \at(p_j)$ is handled by Algorithm~\ref{alg:arrival}. 
We denote the sketch just before the arrival by $\sketch(t^-)$, and the updated sketch by $\sketch(t^+)$.

%------------------------------------------------------------------------------------------
\begin{algorithm}[h]
\caption{\textsc{HandleArrival}$(\sketch(t^-),p_j)$} \label{alg:arrival}
\vspace*{-2mm}
\begin{enumerate}
\item If $p_j$ lies in an existing mini-ball~$B\in \B(t^-)$ then add $p_j$ to $R(B)$.
      If $p_j$ did not lie in an existing mini-ball then add $p_j$ to $\Pout(t^-)$.
\item \label{step:charikar} Let $Q := \Pout(t^-) \cup \bigcup_{B\in \B(t^-)} R(B)$ be the set of all points
      in the current sketch, including the just inserted point. 
      Sort the points in $Q$ by decreasing arrival time, and let $q_1,q_2\ldots$
      denote the sorted sequence of points---note that $q_1$ is the newly inserted point~$p_j$.
      Let $Q_i := \{q_1,\ldots,q_i\}$ denote the set containing the first $i$ points from~$Q$.
      Find the largest index~$i^*$ such that the following holds:
      \begin{itemize}
      \item The 3-approximation algorithm of Charikar~\etal~\cite{charikar2001algorithms}
            for the $k$-center clustering algorithm with $z$ outliers, when run on $Q_{i^*}$,
            reports a collection $\B^*$ of $k$ balls of radius at most~$6\rho$.
            (In fact, any other constant-factor approximation algorithm can be used if we adjust the value $6\rho$ appropriately. For concreteness, we use the one by Charikar~\etal)
      \end{itemize}
\item \label{step:update-ts}
      If $Q_{i^*}\neq Q$ (and so the point~$q_{i^*+1}$  exists) then set
      $\ts(t^+) := \max(\ts(t^-,\et(q_{i^*+1}))$.
      Remove the points in $Q\setminus Q_{i^*}$ from the representative sets they are in;
      note that these points all have expiration time at least $\ts(t^+)$.
\item \label{step:new-balls}
      Let $\B^* := \{B_1,\ldots,B_k\}$ be the balls in the solution reported for $Q_{i^*}$.
      Note that all points from $Q_{i^*}$, except for at most $z$ outliers, lie in a ball from~$\B^*$. 
      The new set of mini-balls and outliers is now computed as follows.
      Increase the radius of the balls in $\B^*$ by~$\eps\rho$.
      Denote the set of expanded balls by $\overline{\B}^*$ 
      \begin{enumerate}
      \item \label{step:reused-balls} Add each mini-ball $B\in \B(t^-)$ whose center lies inside a ball
            from $\overline{\B}^*$ to the set $\B(t^+)$ and add their representative sets
           to~$\R(t^+)$.
      \item \label{step:create-balls}
            Let $Z\subset Q_{i^*}$ be the set of points that are not in a representative set added
            to $\R(t^+)$ in Step~\ref{step:reused-balls}. Let
            $Z_{\myin}\subseteq Z$ be the points that lie inside a ball from~$\overline{\B}^*$.
            Go over the points in $Z_{\myin}$ one by one, in arbitrary order, and handle
            each point~$q$ as follows: If there is a mini-ball in the current set~$\B(t^+)$ 
            that contains~$q$ then add $q$ to $R(B)$ for an arbitrary such mini-ball~$B$;
            otherwise create a mini-ball $B$ with center $q$ and radius~$\eps\rho$,
            set $R(B) := \{q\}$ and add $B$ to ~$\B(t^+)$.
      \item \label{step:remove-oldest} For all mini-balls $B\in \B(t^+)$ such that $|R(B)|>z+1$, 
            remove the oldest $|R(B)|-(z+1)$ points from $R(B)$.
      \item Set $\Pout(t^+) := Z\setminus Z_{\myin}$. Note that these are exactly 
             the points from $Q_{i^*}$ that have not been added to a representative set
      \end{enumerate}
\end{enumerate}
\end{algorithm}
%------------------------------------------------------------------------------------------
% We now show that the invariants are maintained.
%--------------------------------------------------------------------------------------
\begin{lemma}
After the arrival of point $p_j$ has been handled, the invariants (Inv-1)--(Inv-5) are restored.
\end{lemma}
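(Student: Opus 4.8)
The plan is to verify the five invariants one by one for the updated sketch $\sketch(t^+)$ produced by Algorithm~\ref{alg:arrival}, treating (Inv-2), (Inv-3) and (Inv-5) as routine bookkeeping against the construction and spending the real effort on (Inv-4). For (Inv-2): the mini-balls reused in Step~\ref{step:reused-balls} already satisfied it, each mini-ball opened in Step~\ref{step:create-balls} has radius $\eps\rho$ and a stream point as its center, and the well-spread property is kept because Step~\ref{step:create-balls} opens a new mini-ball at $q$ only when $q$ lies in no mini-ball currently in $\B(t^+)$. For (Inv-3): disjointness and ``these are the only sets'' are immediate (each point of $Z_{\myin}$ enters exactly one representative set, and $Z$ is disjoint from the representative sets carried over in Step~\ref{step:reused-balls}); $R(B)\subseteq B\cap P(t^+)$ holds because points are only placed into mini-balls that contain them and only currently-alive points are involved; and $|R(B)|\le z+1$ is re-established explicitly in Step~\ref{step:remove-oldest}. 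For (Inv-5): every center of a mini-ball in $\B(t^+)$ lies in one of the $k$ balls of $\overline{\B}^*$, whose radius is at most $(6+\eps)\rho$, and by (Inv-2) these centers are pairwise more than $\eps\rho$ apart, so the doubling property bounds their number by $O(k/\eps^d)$; moreover $\Pout(t^+)=Z\setminus Z_{\myin}$ is contained in the set of points of $Q_{i^*}$ left uncovered by $\B^*$, hence has size at most $z$.

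For (Inv-1) I would observe that the timestamp changes only in Step~\ref{step:update-ts}, to $\ts(t^+)=\max(\ts(t^-),\et(q_{i^*+1}))$. If the maximum equals $\ts(t^-)$, the invariant is inherited from $\sketch(t^-)$. Otherwise $\ts(t^+)=\et(q_{i^*+1})$, and I would use the maximality of $i^*$: the $3$-approximation algorithm used in Step~\ref{step:charikar}, when run on $Q_{i^*+1}$, reports radius larger than $6\rho$, hence $\optkz(Q_{i^*+1})>2\rho$ by its approximation guarantee. For every $t'$ with $t\le t'<\et(q_{i^*+1})$ the whole set $Q_{i^*+1}$ is still in the window---its oldest element is $q_{i^*+1}$, which expires only at $\et(q_{i^*+1})>t'$, and $q_1=p_j$ arrived at $t\le t'$---so $Q_{i^*+1}\subseteq P(t')$ and therefore $\optkz(P(t'))>2\rho$. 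This conclusion is immune to whatever arrives after time $t$, since it only uses that $Q_{i^*+1}$ is a subset of $P(t')$.

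The heart of the proof is (Inv-4), which is where I expect the main obstacle to be. Let $p_i\in P(t^+)\setminus S(t^+)$; I may assume $\et(p_i)>\ts(t^+)$, since otherwise part~(i) holds. If $p_i$ was discarded during the present call, it cannot have been removed in Step~\ref{step:update-ts} (points removed there lie in $Q\setminus Q_{i^*}$, hence have expiration time at most $\et(q_{i^*+1})\le\ts(t^+)$, contradicting the assumption), so it was removed in Step~\ref{step:remove-oldest}; thus it belonged to $R(B)$ for some mini-ball $B\in\B(t^+)$ that is forced to be full (Step~\ref{step:remove-oldest} acts exactly when $|R(B)|>z+1$, leaving $z+1$), and $p_i$ was deleted as one of the oldest points, so every surviving point of $R(B)$ arrived after $p_i$---this is part~(ii). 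If instead $p_i$ was already discarded before the arrival, then by (Inv-4) for $\sketch(t^-)$---whose part~(i) is impossible since $\et(p_i)>\ts(t^-)$---there is a full mini-ball $B_0\in\B(t^-)$ with $p_i\in B_0$ and all of $R(B_0)$ newer than $p_i$. Here the plan has three steps. (a) Every point of $R(B_0)$ has expiration time exceeding $\et(p_i)>\et(q_{i^*+1})$, so none of them can be among the older points $Q\setminus Q_{i^*}$; hence $R(B_0)\subseteq Q_{i^*}$ and $R(B_0)$ loses nothing in Step~\ref{step:update-ts}. (b) $B_0$ must be reused in Step~\ref{step:reused-balls}: if $\cent(B_0)$ lay outside every ball of $\overline{\B}^*$ then, since each ball of $\overline{\B}^*$ is a ball of $\B^*$ enlarged by $\eps\rho$ and $B_0$ has radius $\eps\rho$, all $z+1$ points of $R(B_0)\subseteq Q_{i^*}$ would lie outside every ball of $\B^*$, contradicting that $\B^*$ has at most $z$ outliers on $Q_{i^*}$. (c) After Steps~\ref{step:create-balls} and~\ref{step:remove-oldest} the reused set $R(B_0)$ again has exactly $z+1$ points (it starts with $z+1$, survives Steps~\ref{step:update-ts} without loss by~(a), can only grow, and Step~\ref{step:remove-oldest} trims it back to $z+1$), and since it still contains the $z+1$ points that were present at $t^-$---all of which arrived after $p_i$---and Step~\ref{step:remove-oldest} retains exactly the $z+1$ most recent points, every retained point arrived after $p_i$. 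As $p_i\in B_0\in\B(t^+)$ and $B_0$ is full, part~(ii) holds, and (Inv-4) is restored.

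The delicate point, and the one I would be most careful about, is exactly this last case of (Inv-4): one has to argue simultaneously that a mini-ball that is still ``needed'' (full, with all its surviving points among the recent points $Q_{i^*}$) is necessarily rebuilt from the fresh cover $\overline{\B}^*$, that it therefore loses no points when $\ts$ is advanced, and that a rebuilt full mini-ball stays full while its representative set only moves forward in time. The remaining invariants reduce to inspecting the steps of Algorithm~\ref{alg:arrival}.
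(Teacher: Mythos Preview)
Your proof is correct and follows essentially the same case split as the paper's: (Inv-1) via the approximation guarantee applied to $Q_{i^*+1}$, (Inv-2), (Inv-3), (Inv-5) by direct inspection of the construction, and (Inv-4) by distinguishing whether $p_i$ was dropped in the current call (Step~\ref{step:update-ts} vs.\ Step~\ref{step:remove-oldest}) or earlier (and then arguing the witnessing full mini-ball survives into $\B(t^+)$). Your step~(c) is in fact slightly more careful than the paper, which only notes that the reused mini-ball stays full without explicitly checking that the $z+1$ retained representatives are all newer than~$p_i$; your observation that the pre-trim set already contains $z+1$ points newer than~$p_i$, so the $z+1$ most recent ones are as well, fills that in.
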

%--------------------------------------------------------------------------------------
\begin{proof}
\begin{description}
\item[(Inv-1)]
    If $\ts(t^+) = \ts(t^-)$ then obviously (Inv-1) still holds, so assume that $\ts$ is
    updated by the algorithm in Step~\ref{step:update-ts}. Thus the 
    algorithm of Charikar~\etal~\cite{charikar2001algorithms} returned a solution
    of radius more than~$6\rho$ on $Q_{i^*+1}$. Since the algorithm is a 3-approximation,
    this means $\optkz(Q_{i^*+1})>2\rho$. Because we sorted the points on expiration time,
    we must have $\optkz(P(t'))> 2 \rho$ for all $t'$ with $t\leq t' < \et(q_{i^*+1})$. 
    Hence, (Inv-1) still holds.
\item[(Inv-2)]
    By construction, the mini-balls in $\B(t^+)$ have radius~$\eps\rho$ and are centered at a point 
    from the stream. The mini-balls from $\B(t^-)$ are well spread by induction, so we only
    need to prove that any mini-balls created in Step~\ref{step:create-balls}
    are well spread. This is true because we only create a new mini-ball when its center does not
    lie inside an existing mini-ball.
\item[(Inv-3)]
    The representative sets~$R(B)$  are disjoint by construction, and $|R(B)|\leq z+1$
    is guaranteed by Step~\ref{step:remove-oldest}. Hence (Inv-3) holds.
\item[(Inv-4)]
    Take a point~$p_i\in P(t^+)\setminus S(t^+)$ and assume $\et(p_i) > \ts(t^+)$. 
    We have two cases.
    \begin{itemize}
    \item The first case is that $p_i$ was discarded due to the arrival of $p_j$. 
          Then it was either removed in Step~\ref{step:update-ts}, which implies 
          $\et(p_i) \geq \ts(t^+)$; or it was removed in Step~\ref{step:remove-oldest},
          which implies there is a full mini-ball $B\in\B(t^+)$ such that all points in $R(B)$ are newer
          than~$p_i$. Hence, (Inv-4) holds for~$p_i$
    \item The second case is that $p_i$ was already discarded earlier. We can assume that
          (Inv-4) holds before the arrival of~$p_j$. If $\et(p_i) \leq \ts(t^-)$
          then we also have $\et(p_i) \leq \ts(t^+)$ and we are done. So assume that 
          $p_i\in B$ for a mini-ball $B\in\B(t^-)$ that was full and such that all points
          in $R(B)$ arrived after~$p_i$. If~$R(B)$ contained a point $p_{i'}\in Q\setminus Q_{i^*}$
          then $\et(p_i) < \et(p_{i'}) \leq \ts(t^+)$, and so (Inv-4) holds. If~$R(B)$ 
          only contained points from $Q_{i^*}$ then we have two sub-cases.
          
          The first sub-case is that $B$ is still present in $\B(t^+)$. Then $B$ must still
          be full, because $R(B)$ only contained points from~$Q_{i^*}$.
          Hence, (Inv-4) holds. 
          
          The second sub-case is that $B$ is not present in~$\B(t^+)$. Thus its
          center must lie outside all balls in~$\overline{\B}^*$. But then the points
          in $R(B)$, which are all in $Q_{i^*}$, are outside any ball in~$\B^*$. But this is 
          impossible, since $|R(B)|=z+1$ and $\B^*$ has only~$z$ outliers.
    \end{itemize}
\item[(Inv-5)]    
    All mini-balls in $\B(t^+)$ have their center inside one of the $k$ balls in $\overline{\B}^*$,
    which have radius at most~$6\rho$. Moreover, the mini-balls in $\B(t^+)$ have radius~$\eps\rho$
    and they are well spread. Since the underlying space has doubling dimension~$d$,
    this implies that $|\B(t^+)| =O(k/\eps^d)$.
    
    The outlier set $\Pout(t^-)$
    consists of the points from $Q_{i^*}$ that lie outside the balls in~$\overline{\B}^*$.
    Hence, these points also lie outside the balls in~$\B^*$. Since $\B^*$ is a valid solution to the
    $k$-center problem with $z$ outliers, there are at most $z$ such points.
    Thus $|\Pout(t^+)|\leq z$.
\end{description}
\end{proof}
%------------------------------------------------------------------------------------------

%------------------------------------------------------------------------------------------
\subsection{A sketch for the optimization problem}
%------------------------------------------------------------------------------------------
Above we presented a sketch for a decision version of the problem, for given 
parameters $\rho$ and~$\eps$. The sketch uses $O(kz/\eps^d)$ storage.
We also gave an algorithm\textsc{TryToCover} that either reports a 
collection $\C^*$ of $k$ balls of radius $\optkz(P(t))+2\eps\rho$ covering all
points in $P(t)$ except at most $z$ outliers, or that reports~\NO. In latter case we know that $\optkz(P(t))> 2\rho$.
To make the parameter $\rho$ and $\eps$ explicit, we will from now on denote the sketch
by $\sketch_{\rho,\eps}$.

In the optimization version of the problem we wish to find $k$ congruent balls of minimum radius
that together cover all points in $P(t)$ except for at most $z$ outliers. To obtain a
$(1+\eps)$-approximation for the optimization problem, for a given $\eps>0$,
we maintain a sketch $\sketch_{\rho_i,\eps/2}$ for every
$0\leq i \leq \floor{\log \sigma}$, where $\rho_i := 2^i \cdot \dmin$.
We let $\sketch(t) := \{ \sketch_{\rho_0,\eps/2},\ldots, \sketch_{\rho_{\floor{\log \sigma}},\eps/2} \}$ denote the collection of these sketches at time~$t$.
We can then obtain a $(1+\eps)$-approximate solution with the algorithm
shown in Algorithm~\ref{alg:optimize}.
%------------------------------------------------------------------------------------------
\begin{algorithm}[h] 
\caption{\textsc{FindApproximateCenters}$(\sketch(t))$} \label{alg:optimize}
\begin{algorithmic}[1]
\State $i\gets 0$
\Repeat  \label{li:start-repeat}
    \State $\answer \gets \textsc{TryToCover}(\sketch_{\rho_i,\eps/2})$;  $i\gets i+1$
\Until{$\answer \neq$ \NO} \label{li:end-repeat}
\If{the radii of the balls in $\answer$ is less than $\dmin$}
    \State Reduce the radii of the balls to zero
\EndIf
    \State Report $\answer$
\end{algorithmic}
\end{algorithm}
%------------------------------------------------------------------------------------------

\noindent We obtain the following theorem. The theorem implicitly assumes that algorithm
\textsc{tryToCover}, which is called by \textsc{FindApproximateCenters}, has access to
a subroutine for computing an optimal solution. This issue will be further discussed in Section~\ref{subse:report-solution}.
%------------------------------------------------------------------------------------------
\begin{theorem} \label{th:main}
The sketch $\sketch(t)$ uses $O((kz/\eps^d) \log\sigma)$ storage, and
\textsc{FindApproximateCenters}$(\sketch(t))$ reports a collection $\C^*$ of balls of
radius at most $(1+\eps) \cdot \optkz(P(t))$ covering all points from~$P(t)$, except at most~$z$ outliers. 
\end{theorem}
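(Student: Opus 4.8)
The plan is to establish the theorem in two parts: the storage bound, which is immediate, and the approximation guarantee, which requires combining Lemma~\ref{le:correctness} with a careful choice of scale~$\rho_i$. For the storage bound, note that $\sketch(t)$ consists of $\floor{\log\sigma}+1 = O(\log\sigma)$ individual sketches $\sketch_{\rho_i,\eps/2}$, and each one uses $O(kz/(\eps/2)^d) = O(kz/\eps^d)$ storage by the guarantee established in the previous subsection; multiplying gives $O((kz/\eps^d)\log\sigma)$.

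For correctness, first I would argue that the \texttt{repeat} loop in \textsc{FindApproximateCenters} terminates with a non-\NO{} answer. By Lemma~\ref{le:correctness}, whenever \textsc{TryToCover}$(\sketch_{\rho_i,\eps/2})$ reports \NO{} we have $\optkz(P(t)) > 2\rho_i$. Since $\rho_{\floor{\log\sigma}} = 2^{\floor{\log\sigma}}\dmin \geq (\sigma/2)\dmin = \dmax/2$, a \NO{} answer at the largest scale would force $\optkz(P(t)) > \dmax$, contradicting the fact that $\dmax$ upper-bounds all pairwise distances (so $k \geq 1$ balls of radius $\dmax$ trivially cover everything). Hence some scale $\rho_i$ yields a non-\NO{} answer, and the loop stops at the smallest such index, call it~$i^*$.

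Next I would bound the radius of the reported balls. Let $\ropt := \optkz(P(t))$. If $i^* = 0$, then since \textsc{TryToCover} reports a collection of balls of radius $\ropt + 2(\eps/2)\rho_0 = \ropt + \eps\dmin$, and the final \texttt{if} in Algorithm~\ref{alg:optimize} handles the edge case: if this radius is below $\dmin$ it is reduced to zero (which is still valid since then all points coincide within the relevant clusters, as $\dmin$ lower-bounds distances between distinct points), and otherwise $\ropt \geq \dmin - \eps\dmin$, wait — more carefully, one should observe that either $\ropt = 0$ (all non-outlier points split into $\le k$ coincident groups) or $\ropt \geq \dmin$; in the former case the reported radius $\eps\dmin < \dmin$ triggers the reduction to zero, in the latter $\eps\dmin \le \eps\ropt$ and we are fine. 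For $i^* \geq 1$, the loop did not stop at index $i^*-1$, so \textsc{TryToCover}$(\sketch_{\rho_{i^*-1},\eps/2})$ reported \NO, whence $\ropt > 2\rho_{i^*-1} = \rho_{i^*}$. Therefore the reported radius is $\ropt + 2(\eps/2)\rho_{i^*} = \ropt + \eps\rho_{i^*} \leq \ropt + \eps\ropt = (1+\eps)\ropt$. Finally, Lemma~\ref{le:correctness} guarantees that these balls cover all of $P(t)$ except at most~$z$ outliers, which completes the proof.

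The main obstacle I anticipate is not any single deep step but rather the boundary bookkeeping at the smallest scale: making the $i^* = 0$ / $\ropt$-near-zero case airtight, in particular justifying that reducing sub-$\dmin$ radii to zero preserves feasibility. This hinges on the observation that if $k$ balls of radius $r < \dmin$ cover all but $z$ points of $P(t)$, then (since distinct points are at distance $\geq \dmin > 2r$, so no ball of radius~$r$ contains two distinct points) each such ball covers a set of mutually coincident stream points, and hence a radius-zero ball at any of those locations covers the same set. Everything else is a routine chaining of the decision-sketch guarantee across the $O(\log\sigma)$ geometric scales.
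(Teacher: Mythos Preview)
Your argument follows essentially the same route as the paper's: the storage bound is immediate, termination of the loop is forced because $\rho_{\floor{\log\sigma}} \geq \dmax/2$, and for $i^*\geq 1$ the \NO{} answer at level $i^*-1$ gives $\optkz(P(t)) > \rho_{i^*}$, so the additive error $\eps\rho_{i^*}$ becomes multiplicative. The only wrinkle is in your handling of the smallest scale: you write ``distinct points are at distance $\geq \dmin > 2r$'', but the algorithm's test only guarantees $r<\dmin$, not $2r<\dmin$, so the triangle-inequality step as stated does not go through; the clean fix (which is what the paper implicitly uses) is to observe that when $\optkz(P(t))=0$ the balls $C_i$ computed in \textsc{TryToCover} already have radius~$0$ and hence are centered at stream points, so after expansion to radius $\eps\dmin<\dmin$ each ball still contains only copies of its own center and can be shrunk back to radius~$0$ without loss.
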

%------------------------------------------------------------------------------------------
\begin{proof}
The storage bound follows immediately from the fact that each of the
sketches $\sketch_{\rho_i,\eps/2}$ uses $O((kz/\eps^d))$ storage.
It remains to prove the second part of the theorem.

First consider the case $\optkz(P(t))=0$. When we run $\textsc{TryToCover}(\sketch_{\rho_i,\eps/2})$
with $i=0$, then by Lemma~\ref{le:correctness} we will report a collection of balls of radius 
\[
\optkz(P(t)) + 2 (\eps/2)\rho_0 = \eps \cdot \dmin.
\]
Since $\eps<1$, the radii are strictly smaller than 
$\dmin$. Since the balls are centered at points
from $X$, we may as well reduce the radii to zero, thus achieving 
an optimal solution.

Next, consider the case $\optkz(P(t))>0$. Then $\optkz(P(t))\geq \dmin=\rho_0$. Consider the loop in lines~\ref{li:start-repeat}--\ref{li:end-repeat}
and let $i^*$ be the value of the counter~$i$ when we obtain
$\textsc{TryToCover}(\sketch_{\rho_i,\eps/2}) \neq$~\NO. Note that this
must happen at some point, because for $i = \floor{\log \sigma}$
we have
\[
\rho_i = 2^{\floor{\log \sigma}} \cdot \dmin 
    \geq (\sigma/2) \cdot \dmin 
    = (1/2) \cdot \dmax
\]
and so Lemma~\ref{le:correctness} guarantees that $\answer\neq$~\NO.
If $i^*=0$ then the balls in the reported solution have radius
\[
\optkz(P(t)) + 2 (\eps/2)\rho_0  \leq (1+\eps)\cdot \optkz(P(t)).
\]
Otherwise we know that the answer for $i=i^*-1$ is~\NO, which implies that
$\optkz(P(t))>2\rho_{i^*-1}=\rho_{i^*}$. Hence, the balls in the solution that is
reported for $i^*$ have radius
\[
\optkz(P(t)) + 2 (\eps/2)\rho_{i^*}  \leq (1+\eps)\cdot \optkz(P(t)).
\]
\end{proof}
%------------------------------------------------------------------------------------------
Our sketch for the $k$-center problem can also be used
for the $k'$-center problem for $k'<k$. Moreover, a sketch for the $k$-center problem for~$k\geq 1$
can be used for the diameter problem. Recall that the diameter problem with outliers for the set $P(t)$ asks for the value
\[
\diamz(P(t)) := \min \{ \diam(P(t)\setminus Q) : |Q|=z \}, 
\]
that is, $\diamz(P(t))$ is the smallest diameter one can obtain by deleting $z$
outliers from~$P(t)$. We say that an algorithm reports a $(1-\eps)$-approximation
to $\diamz(P(t))$ if it reports a value~$D$ with $(1-\eps)\cdot \diamz(P(t)) \leq D \leq \diamz(P(t))$.
%------------------------------------------------------------------------------------------
\begin{theorem} \label{th:diameter}
The sketch for the $k$-center problem with outliers as presented above can also be used to 
provide a $(1+\eps)$-approximation for the $k'$-center problem with outliers, for any $1\leq k'\leq k$.
Moreover, it can be used to provide a $(1-2\eps)$-approximation for the diameter problem with
outliers. 
\end{theorem}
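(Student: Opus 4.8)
The plan is to handle the two claims separately, reducing each to Theorem~\ref{th:main}. For the $k'$-center claim with $k' \le k$, the key observation is that the sketch $\sketch(t)$ is parametrized by $k$ only through the decision subroutine's use of the Charikar \etal $k$-center-with-outliers approximation and through the invariant bound $|\B(t)| = O(k/\eps^d)$; neither of these obstructs running \textsc{FindApproximateCenters} with a smaller value $k'$. More concretely: $S(t) \subseteq P(t)$ still holds, and the crucial property driving Lemma~\ref{le:correctness} — that every discarded non-expired point lies in a \emph{full} mini-ball all of whose $z+1$ representatives are newer — is independent of $k$. So running \textsc{TryToCover} on the existing sketches but computing $\optkz[k']$ (an optimal $k'$-center-with-$z$-outliers solution on $S(t)$) in line~\ref{li:compute-opt} instead of $\optkz$ yields, by exactly the argument of Lemma~\ref{le:correctness}, either \NO{} (whence $\mathrm{opt}_{k',z}(P(t)) > 2\rho$, using $S(t) \subseteq P(t)$ and (Inv-1)) or a collection of $k'$ balls of radius $\mathrm{opt}_{k',z}(S(t)) + 2\eps\rho \le \mathrm{opt}_{k',z}(P(t)) + 2\eps\rho$ covering all but $z$ points of $P(t)$: the full-mini-ball covering argument only ever uses that at least one representative of a full mini-ball is covered by \emph{some} ball, which still holds since $\B^*$ (now a $k'$-center solution) permits only $z$ outliers. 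Then the scaling argument of Theorem~\ref{th:main} goes through verbatim with $k$ replaced by $k'$, giving the $(1+\eps)$-approximation.

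For the diameter claim, I would first recall the standard relationship between the $1$-center and the diameter. If $R := \mathrm{opt}_{1,z}(P(t))$ is the optimal $1$-center-with-$z$-outliers radius, then for the set $P' := P(t) \setminus Q$ achieving the minimum, $P'$ lies in a ball of radius $R$ so $\diam(P') \le 2R$, hence $\diamz(P(t)) \le 2R$; conversely, taking the set $P''$ realizing $\diamz(P(t))$ and any of its points as center shows that $P''$ lies in a ball of radius $\diamz(P(t))$, so $R \le \diamz(P(t))$. Therefore $R \le \diamz(P(t)) \le 2R$, i.e. $\diamz(P(t)) \in [R, 2R]$. Now invoke the $k'=1$ case just established: \textsc{FindApproximateCenters} reports a ball of radius $R^* \le (1+\eps) R \le (1+\eps)\diamz(P(t))$ covering all but $z$ points of $P(t)$, and from its construction (the reported radius is $\mathrm{opt}_{1,z}(S(t)) + \eps \rho_{i^*} \ge \mathrm{opt}_{1,z}(S(t)) \ge$ something comparable to $R$) we also get $R^* \ge R$. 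The algorithm then outputs $D := R^*$, and the claim is that $D$, or rather $2R^*$ suitably shrunk, approximates $\diamz$. Report $D' := R^*$; then $D' \ge R \ge \diamz(P(t))/2$ and $D' \le (1+\eps) R \le (1+\eps)\diamz(P(t))$.

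To sharpen this to the stated $(1-2\eps)$-factor I would be more careful with constants: run the $1$-center sketch with parameter $\eps' := \eps$ (it already uses $\eps/2$ internally), so the reported radius $R^*$ satisfies $R \le R^* \le (1+\eps) R$. Output $D := \max(R^*, \diamz\text{-estimate from the stored points})$ — in fact the cleanest route is to output the diameter of the at most $z+1$ points kept in each mini-ball together with $\Pout$, but I would instead simply report $D := R^*$ and argue $R^* \le \diamz(P(t))$ is \emph{not} guaranteed, so I scale: report $\widetilde D := R^*/(1+\eps)$, giving $\widetilde D \le R \le \diamz(P(t))$ and $\widetilde D \ge R/(1+\eps) \ge \diamz(P(t)) / (2(1+\eps)) \ge (1-2\eps)\diamz(P(t))/2$. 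The factor-of-two gap between $R$ and $\diamz$ is intrinsic — the claimed statement ``$(1-2\eps)$-approximation for the diameter'' must therefore mean approximation up to the constant relating $1$-center and diameter, or the theorem intends $D \in [(1-2\eps)\diamz, \diamz]$ after multiplying the $1$-center output by the right constant; the main obstacle in writing this cleanly is pinning down exactly which normalization the authors adopt so that the $(1+\eps)$ multiplicative error on the $1$-center radius translates into the claimed $(1-2\eps)$ one-sided error on the diameter, and verifying $R^* \ge R$ (equivalently $\mathrm{opt}_{1,z}(S(t)) \ge \mathrm{opt}_{1,z}(P(t)) - 2\eps\rho_{i^*}$, which follows since the reported ball covers $P(t) \setminus (\le z\text{ pts})$ and has the stated radius). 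All remaining steps are routine substitutions into the inequalities above.
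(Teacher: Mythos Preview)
Your argument for the $k'$-center part is essentially the paper's: the invariants (Inv-2)--(Inv-4) are independent of $k$, and (Inv-1) transfers from $k$ to $k'\leq k$ via $\opt_{k',z}\geq\optkz$ (you invoke (Inv-1) without making this monotonicity explicit, but that is the only missing sentence). So that half is fine.

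The diameter half, however, does not work. You attempt to extract an estimate of $\diamz(P(t))$ from the approximate $1$-center radius $R^*$, using the sandwich $R\leq \diamz(P(t))\leq 2R$. As you yourself note, that sandwich has an irreducible factor of two: knowing $R^*\in[R,(1+\eps)R]$ only localizes $\diamz(P(t))$ to within a factor of two, no matter how you rescale. The paper's definition of a $(1-\eps)$-approximation for the diameter is explicit---a value $D$ with $(1-\eps)\diamz(P(t))\leq D\leq\diamz(P(t))$---so the theorem really does claim a genuine $(1-2\eps)$-multiplicative guarantee, and your route cannot deliver it. Your suggestion that the authors might mean something weaker is incorrect.

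The paper's fix is to bypass the $1$-center radius entirely: in \textsc{TryToCover} replace line~\ref{li:compute-opt} by computing $\diamz(S(t))$ and report $D:=\diamz(S(t))$. Since $S(t)\subseteq P(t)$ we get $D\leq\diamz(P(t))$ for free. For the lower bound, use (Inv-4) directly: if the algorithm does not report \NO, then every point of $P(t)\setminus S(t)$ lies in a full mini-ball $B$ of diameter $2\eps\rho$ with $|R(B)|=z+1$. Hence for any set $Q\subseteq S(t)$ of $z$ outliers, each $p\in P(t)\setminus S(t)$ still has a surviving neighbor in $R(B)\setminus Q$ at distance $\leq 2\eps\rho$, which gives $\diam(P(t)\setminus Q)\leq\diam(S(t)\setminus Q)+4\eps\rho$ and thus $\diamz(P(t))\leq\diamz(S(t))+4\eps\rho$. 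Feeding this additive guarantee into the doubling mechanism of Theorem~\ref{th:main} (with $\eps/2$) yields the claimed $(1-2\eps)$-approximation. The point you missed is that $S(t)$ is a coreset for the diameter itself, not merely for the $1$-center cost.
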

%------------------------------------------------------------------------------------------
\begin{proof}
The value $k$ plays a role in (Inv-1) and (Inv-5), not in the other invariants.
In (Inv-5) the value of $k$ determines the size of $\B(t)$---it has no influence on
the correctness. As for (Inv-1), we observe that $\opt_{k',z}(P(t)) \geq \optkz(P(t))$ for
any $k'\leq k$, so the fact that (Inv-1) holds for $k$ implies that it holds
for $k'<k$ as well. Hence, a sketch for the $k$-center problem will have the properties
required of a sketch for the $k'$-center problem (except that the storage depends on $k$
instead of~$k'$.) Thus running \textsc{TryToCover} on a
sketch for the $k$-center, where in line~\ref{li:compute-opt} we compute $\opt_{k',z}(P(t))$,
will give a correct result for~$k'$.
\medskip

Now consider the diameter problem. Suppose we run \textsc{TryToCover} on a
sketch for the $k$-center problem, where in line~\ref{li:compute-opt} we compute $\diamz(P(t))$,
and instead of lines~\ref{li:expand}--\ref{li:report-expanded} we report~$D :=\diamz(S(t))$.
We claim that if the algorithm reports~\NO then we have $\diamz(P(t)) > 2\rho$, and otherwise $\diamz(P(t))-4\eps\rho \leq D \leq \diamz(P(t))$.

Note that $\diamz(P(t)) \geq \optkz(P(t))$ for any $k\geq 1$. Hence, when $t<\ts(t)$
then we have  $\diamz(P(t)) \geq \optkz(P(t)) > 2\rho$.
This implies that the claim holds when \textsc{TryToCover} reports~\NO.

If \textsc{TryToCover} does not report~\NO, it reports $D:=\diamz(S(t))$.
Clearly we then have $D \leq \diamz(P(t))$. Now suppose for a contradiction that $D<\diamz(P(t))-4\eps\rho$. Let $p_i,p_j\in P(t)$ be such that
$\dist(p_i,p_j) = \diamz(P(t))$. We will argue that there are points
$p_{i'},p_{j'}\in S(t)$ such that $\dist(p_i,p_{i'})\leq 2\eps\rho$ and
$\dist(p_j,p_{j'}) \leq 2\eps\rho$. But then we would have $D=\diamz(S(t))\geq \diamz(P(t))-4\eps\rho$, which contradicts the assumption.
We will argue the existence of $p_{i'}$; the argument for $p_{j'}$ is similar.
If $p_i\in S(t)$ then we can take $p_{i'}:= p_i$ and we are done. Otherwise, as argued in the proof of Lemma~\ref{le:correctness}, we know that $p_i\in B$ for a ball $B\in B(t)$ that is full. In particular $R(B)$
contains at least one point $p_{i'}$, and this point must be at distance at most
$2\eps\rho$ from~$p_i$, as claimed.

We have proved the claim that if the algorithm reports~\NO then we have $\diamz(P(t)) > 2\rho$, and otherwise we have $\diamz(P(t))-4\eps\rho \leq D \leq \diamz(P(t))$. Plugging this claim for the decision problem into the mechanism 
to obtain a sketch for the optimization problem---recall that there we used $\eps/2$ as parameter---now gives the desired result.
\end{proof}
%------------------------------------------------------------------------------------------
%------------------------------------------------------------------------------------------
Theorem~\ref{th:diameter} implies that there exists a sketch for the diameter
problem with outliers in the sliding-window model that gives a $(1+\eps)$-approximation 
to $\diamz(P(t))$ using $O((z/\eps^d)\log\sigma)$ storage, namely the sketch
for the 1-center problem.

%------------------------------------------------------------------------------------------
\subsection{Time complexity}
\label{subse:report-solution}
%------------------------------------------------------------------------------------------
Above we focused on the storage used by our sketch, and on the approximation ratio it can potentially
provide. We now discuss the time complexity.

The algorithm that handles departures trivially runs in $O(1)$ time. The most time-consuming
step in the algorithm that handles arrivals is Step~\ref{step:charikar}, where we run the algorithm of 
Charikar~\etal~\cite{charikar2001algorithms} on a set of $O(kz/\eps^d)$ points. This takes
$O((kz/\eps^d)^3)$ time. 
%
% I THINK THIS CAN BE IMPROVED< BUT THE PAPER SAYS O(n^3)
%
Note that, both for departures and arrivals, we need
to execute the respective algorithms for each of the $\log\sigma$ sketches~$\sketch_{\rho_i,\eps/2}$
we maintain. 

The time needed for \textsc{FindApproximateCenters} is more interesting, since it calls 
{\sc TryToCover}, which is supposed to compute an optimal solution for the $k$-center problem
with outliers on a given set $S(t)$ of $O(kz/\eps)$ points. It is routine to check that if
{\sc TryToCover} would use a $c$-approximate solution on~$S(t)$, instead of an optimal solution,
our final result would be a~$c(1+\eps)$-approximation. 
In $d$-dimensional Euclidean space  we can solve the $k$-center
problem with outliers in time polynomial in $n:=|S(t)|$  (for constant $k$ and $d$),
as follows: first generate all $O(n^{d+1})$ potential centers---there are $O(n^{d+1})$ potential
centers because the smallest enclosing ball of a given point set is defined
by at most $d+1$ points---then generate all 
possible collections of $k$ such centers, and then for each of the $O(n^{(d+1)k})$ 
such collections find the minimum radius~$\rho$ such that we can cover all except for 
$z$ points. A similar approach is possible for other metrics in $\Reals^d$
that are sufficiently well-behaved (such as $\ell_p$ metrics, for instance).

For arbitrary metric spaces the situation is more tricky. The standard assumption,
which we also make, is that we can compute the distance between any two given points
from the metric space in $O(1)$ time. Still, computing an optimal solution on the
set $S(t)$ can be quite slow or perhaps even infeasible. For example,
when an optimal center is a point from the metric space that does not occur in
the stream, then the algorithm may be unable to retrieve this center. 
So unless we are in Euclidean space, or some other ``well-defined'' space, it 
seems most natural to define the underlying space as consisting of exactly the points
in the stream. Note that this can still be somewhat problematic, as it may require the
algorithm to use points that have already expired as centers for the current set~$S(t)$.
These issues seem unavoidable for any algorithm that maintains a sketch in the sliding-window
model. Also note that, using just the points in the current set $P(t)$
as centers, one can always obtain a 2-approximation to the optimal clustering. 
Invariant~(Inv-4) thus implies that by using the set~$S(t)$ 
as potential centers we can obtain a $(2+\eps)$ approximation. Also note that
the optimal $k$-center clustering with $z$ outliers on $S(t)$, under the restriction
that the centers come from $S(t)$, can be trivially computed in $O(|S(t)|^{k+2})$ time.

The following theorem summarizes the performance of our sketch.
%------------------------------------------------------------------------------------------
\begin{theorem} \label{th:main-plus-alg}
Consider the $k$-center problem with $z$ outliers in the sliding-window model, for
streams of points from a space with doubling dimension~$d$. Let $\dmin$ and $\dmax$ be given 
(lower and upper) bounds on the minimum and maximum distance between any two points in the
stream, and let $\sigma := \dmax/\dmin$ be (an upper bound on) the spread of the stream.
Suppose we have a $c$-approximation algorithm for the static version of the problem that runs in $T_{k,z}(n)$ time
on a set of $n$ points. Then for any $0<\eps<1$ we can maintain a sketch with the following properties:
\begin{itemize}
\item The sketch uses $O((kz/\eps^d) \log\sigma)$ storage.
\item Departures and arrivals can be handled in $O(\log\sigma)$ and $O((kz/\eps^d)^3\log\sigma)$ time, respectively.
\item At any time $t$, the algorithm can report a valid solution for $P(t)$ of cost
     at most $(1+\eps)c \cdot \optkz(P(t))$.
      The time needed to compute the solution is~$T_{k,z}(n)$, where $n:= O(kz/\eps^d)$.
\end{itemize}
\end{theorem}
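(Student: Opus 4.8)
The plan is to assemble Theorem~\ref{th:main-plus-alg} from the pieces already proved, treating it as a bookkeeping exercise: (i) the storage bound, (ii) the running times for departures and arrivals, and (iii) the quality and cost of reporting a solution. The storage bound is immediate from Theorem~\ref{th:main}: we maintain $\sketch_{\rho_i,\eps/2}$ for $0\leq i\leq\floor{\log\sigma}$, each of size $O(kz/\eps^d)$, for a total of $O((kz/\eps^d)\log\sigma)$.

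For the update times I would argue as in Section~\ref{subse:report-solution}. A departure, by the ``Handling departures'' paragraph, touches one representative set or $\Pout$ and runs in $O(1)$ time per sketch, hence $O(\log\sigma)$ overall. An arrival is handled by Algorithm~\ref{alg:arrival}; the dominant cost is Step~\ref{step:charikar}, which runs the Charikar~\etal\ $3$-approximation~\cite{charikar2001algorithms} on the set $Q$ of points currently in one sketch. By (Inv-3) and (Inv-5), $|Q|=|\Pin|+|\Pout|=O(kz/\eps^d)$, and the Charikar~\etal\ algorithm runs in cubic time, so one arrival costs $O((kz/\eps^d)^3)$ per sketch and $O((kz/\eps^d)^3\log\sigma)$ across all $\floor{\log\sigma}+1$ sketches. (One should note that the binary search for $i^*$ only re-runs the approximation $O(\log|Q|)$ times, which is absorbed; I would mention this in one line but not belabor it, since it does not change the stated bound once constants are hidden.)

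For the reporting step I would invoke Theorem~\ref{th:main} together with the remark, made just before that theorem and again in Section~\ref{subse:report-solution}, that if \textsc{TryToCover} uses a $c$-approximate static solver in line~\ref{li:compute-opt} instead of an exact one, then the balls it reports have radius at most $c\cdot\optkz(S(t))+2\eps\rho$; chaining this through \textsc{FindApproximateCenters} exactly as in the proof of Theorem~\ref{th:main}---with the parameter $\eps/2$ in place of $\eps$, and using $\optkz(S(t))\leq\optkz(P(t))$ and, in the relevant branch, $\optkz(P(t))>2\rho_{i^*-1}=\rho_{i^*}$---yields a reported cost of at most $(1+\eps)c\cdot\optkz(P(t))$, while (Inv-4) and the $2\eps\rho$-expansion guarantee the solution is valid on all of $P(t)$ with at most $z$ outliers, just as in Lemma~\ref{le:correctness}. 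The time to produce this solution is one call to the static solver on $S(t)$, i.e.\ $T_{k,z}(n)$ with $n=O(kz/\eps^d)$ (the loop over $i$ adds only $O(\log\sigma)$ calls to the $O(n^3)$-time \textsc{TryToCover} pre-processing, which I would note is dominated under the natural assumption $T_{k,z}(n)=\Omega(n^3)$, or simply fold into the arrival bound).

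The only genuinely delicate point---and the one I would be most careful about---is making sure the $c$-approximation error and the $(1+\eps)$ discretization error compose in the claimed multiplicative way rather than, say, $c(1+\eps)+O(\eps)$. The issue is that \textsc{TryToCover} adds an \emph{additive} $2\eps\rho$ after scaling the $S(t)$-solution, so in the branch $i^*>0$ one gets $c\cdot\optkz(S(t))+\eps\rho_{i^*}\leq c\cdot\optkz(P(t))+\eps\cdot\optkz(P(t))=(c+\eps)\optkz(P(t))$, which is bounded by $c(1+\eps)\optkz(P(t))$ since $c\geq1$; and in the branch $i^*=0$ the term $\eps\rho_0=\eps\cdot\dmin\leq\eps\cdot\optkz(P(t))$ when $\optkz(P(t))>0$, with the $\optkz(P(t))=0$ case handled by the radius-reduction step exactly as in Theorem~\ref{th:main}. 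Verifying these inequalities carefully is the crux; everything else is direct substitution into results already established.
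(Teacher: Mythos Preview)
Your proposal is correct and matches the paper's approach; in fact the paper gives no formal proof of Theorem~\ref{th:main-plus-alg} at all, presenting it as a summary of the discussion in Section~\ref{subse:report-solution}, so your write-up is already more detailed than the source.

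One small correction is worth making in your ``delicate point''. In the branch $i^*>0$, once the test in line~3 of \textsc{TryToCover} uses the $c$-approximate value $A(S(t))$ in place of $\optkz(S(t))$, a \NO answer at level~$i^*-1$ coming from $A(S(t))>2\rho_{i^*-1}$ only yields $c\cdot\optkz(P(t))\geq A(S(t))>2\rho_{i^*-1}$, hence $\rho_{i^*}<c\cdot\optkz(P(t))$ rather than $\rho_{i^*}<\optkz(P(t))$. (If the \NO came from $t<\ts(t)$, your stronger inequality does hold via (Inv-1).) Plugging this in, the reported radius is still at most $c\cdot\optkz(P(t))+\eps\, c\cdot\optkz(P(t))=c(1+\eps)\,\optkz(P(t))$, so your conclusion stands; only the intermediate bound $(c+\eps)\optkz(P(t))$ is not always reached. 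The paper itself merely says ``it is routine to check'' here, so this is a refinement of, not a departure from, its argument.
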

%------------------------------------------------------------------------------------------

%------------------------------------------------------------------------------------------
\section{A lower bound}
\label{se:lower-bound}
%------------------------------------------------------------------------------------------
Above we presented a sketch of size $O((kz/\eps^d)\log \sigma)$ that provides 
a $(1+\eps)$-approximation for the $k$-center problem with $z$ outliers in the sliding-window model. 
% where $d$ is the doubling dimension of the underlying space and $\sigma$ is its spread.
In this section we show that this is tight, up to the dependency on~$d$.
We will first show that to obtain 
any constant approximation ratio for the problem in~$\Reals^1$, a sketch must use~$\Omega(kz)$ storage.
Then we prove that, for $0<\eps<1$, any sketch giving a $(1+\eps)$-approximation
in~$\Reals^1$ must use $\Omega((kz/\eps)\log \sigma)$ storage.

%------------------------------------------------------------------------------------------
\paragraph*{The lower-bound model}
%------------------------------------------------------------------------------------------
Our lower-bound model is extremely simple and general. We allow the algorithm to store points,
or weighted points, or balls, or whatever it wants so that it can approximate an optimal
solution for $P(t)$ at any time~$t$. We only make the following assumptions. Let $S(t)$
be the collection of objects being stored at time~$t$.
\begin{itemize}
\item Each object in $S(t)$ is accompanied by an expiration time, which is equal to the expiration 
      time of some point $p_i\in P(t)$. 
\item Let $p_i\in P(t)$. If no object in $S(t)$ uses $\et(p_i)$
      as its expiration time, then no object in $S(t')$ with $t'>t$ can use $\et(p_i)$
      as its expiration time. (Once an expiration time has been discarded, it cannot be recovered.)
\item The solution reported by the algorithm is uniquely determined by $S(t)$, and
      the algorithm only modifies $S(t)$ when a new point arrives or when an object in~$S(t)$ expires.
\item The algorithm is deterministic and oblivious of future arrivals. In other words, the set
      $S(t)$ is uniquely determined by the sequence of arrivals up to time~$t$, and the
      solution reported for $P(t)$ is uniquely determined by~$S(t)$.
\end{itemize}
The storage used by the algorithm is defined as the number of objects in~$S(t)$.
The algorithm can decide which objects to keep in $S(t)$ anyway it wants; it may even 
keep an unbounded amount of extra information in order to make its decisions. The algorithm can 
also derive a solution for $P(t)$ in any way it wants, as long as the solution is valid and uniquely
determined by~$S(t)$. Clearly, the sketch from the previous section adheres to the model.
\medskip

%------------------------------------------------------------------------------------------
\paragraph*{A lower bound for constant-factor approximations}
%------------------------------------------------------------------------------------------
Next we give a simple example showing that
any constant-factor approximation must use $\Omega(kz)$ storage. 

Let \alg be an algorithm that maintains a $c$-approximation for the problem in~$\Reals^1$. 
Consider the set $P(t)$ shown in Figure~\ref{fi:lower-bound-constant}.
The set $P(t)$ consists of $k+1$ clusters $C_1,\ldots,C_{k+1}$ 
of $z+1$ points each. The points within a cluster are at distance~1
from each other (so the radius of a cluster is~$z/2$) and the distance between two
consecutive clusters is~$cz+1$. Note that the spread of the set is $(k+1)z+k(cz+1) =\Theta(ckz)$. 
%------------------------------------------------------------------------------------------
\begin{figure}
\begin{center}
\includegraphics{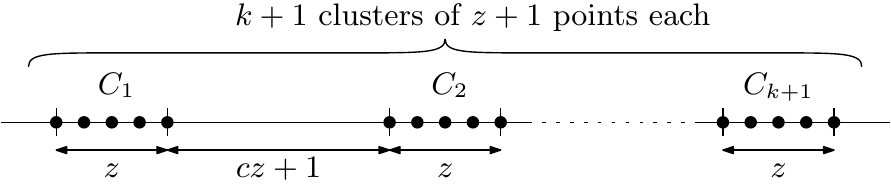}
\end{center}
\caption{The lower-bound construction showing that any $c$-approximation algorithm must 
          use $\Omega(kz)$ storage.}
\label{fi:lower-bound-constant}
\end{figure}
%------------------------------------------------------------------------------------------
Let $\Texp$ be the set of expiration times of the points in $P(t)$, except for the first point.
Note that $|\Texp|=(k+1)(z+1)-1$. Suppose at least
one of these expiration times, say the expiration time of point~$p^*$,
is not used by any of the objects in~$S(t)$. The adversary now proceeds as follows.
Let $Q\subset P(t)$ be the set of points expiring before~$t_{p^*}$, where $t_{p^*} := \et(p^*)$
is the expiration time of~$p^*$. Whenever a point in $Q$ expires, the adversary immediately replaces 
it by a point at the same location. 

Now let $t_{p^*}^-$ and $t_{p^*}^+$ be the times immediately before
and after~$t_{p^*}$, respectively. Then at time $t_{p^*}^-$ we have the point set shown in
Figure~\ref{fi:lower-bound-constant}, while at time $t_{p^*}^+$ the point~$p^*$ has expired.
Since we are allowed to use only $k$ balls and $z$ outliers, any valid solution at
time $t_{p^*}^-$ must have a ball that contains points from at least two of the clusters. 
Since the distance between any two clusters is at least $cz+1$, the radius of such a ball is at least $(cz+1)/2$.
Hence, $\optkz(P(t_{p^*}^-))\geq (cz+1)/2$. 
On the other hand, at time $t_{p^*}^+$ there is a cluster~$C_{i^*}$ of size $z$.
Thus \alg can designate the points from $C_{i^*}$ as outliers and cover the remaining points 
with $k$ balls of radius at most $z/2$, which implies $\optkz(P(t_{p^*}^+)) \leq z/2$.

Since the algorithm must report a valid solution at time~$t_{p^*}^-$ and it
must report the same solution at time~$t_{p^*}^+$, the approximation ratio 
at time~$t_{p^*}^+$ is at least
\[
\frac{\optkz(P(t^-_{p^*}))}{\optkz(P(t^+_{p^*}))} 
    \geq \frac{(cz+1)/2}{z/2} > c
\]
which contradicts that \alg is a $c$-approximation algorithm.
We obtain the following theorem.
%------------------------------------------------------------------------------------------
\begin{theorem}\label{th:lower-bound-constant}
Let \alg be a $c$-approximation algorithm for the $k$-center problem with
$z$ outliers in $\Reals^1$, with $z\geq 1$, that works in the model described above.
Then there is a problem instance of spread $\Theta(kcz)$ that requires \alg to 
use $\Omega(kz)$ storage.
\end{theorem}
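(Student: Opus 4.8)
The plan is to formalize the adversarial argument that the excerpt already sketches, turning it into a clean counting/pigeonhole proof. The high-level idea is: fix the $(z+1)$-point-per-cluster, $(k+1)$-cluster configuration $P(t)$ of Figure~\ref{fi:lower-bound-constant}, count how many distinct expiration times the points (other than the very first) contribute, and observe that if the algorithm stores fewer than that many objects, then by the bijection between stored objects and expiration times it must have discarded at least one expiration time $\et(p^*)$. Then I would run the ``refresh'' adversary that keeps re-inserting every point whose expiration time precedes $\et(p^*)$, so that the window contents are exactly the picture in Figure~\ref{fi:lower-bound-constant} at time $t_{p^*}^-$, but at time $t_{p^*}^+$ the point $p^*$ alone has expired while everything strictly older than it has been refreshed. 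Finally I would contrast $\optkz$ before and after $t_{p^*}$ and use the fact that the algorithm's reported solution is unchanged across $t_{p^*}$ (by the model, the state $S(t)$ only changes when an object expires or arrives, and no stored object uses $\et(p^*)$).

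Concretely, the key steps in order would be: (1) Set up the configuration precisely, verify that consecutive clusters are at distance $cz+1$, that each cluster has radius $z/2$, and that the spread is $\Theta(ckz)$ as claimed. (2) Let $\Texp$ be the set of expiration times of all points of $P(t)$ except the first; note $|\Texp| = (k+1)(z+1)-1$, and argue that each object in $S(t)$ ``uses'' exactly one expiration time, so if $|S(t)| < (k+1)(z+1)-1$ then some $\et(p^*)\in\Texp$ is unused. Here I will need the model's second bullet: once an expiration time is discarded it is never recovered, which is what lets the adversary safely refresh older points without the algorithm being able to ``wait'' and recover $\et(p^*)$ later. (3) Describe the refresh adversary on $Q := \{p_i \in P(t) : \et(p_i) < t_{p^*}\}$, and check that at $t_{p^*}^-$ the window is a translate/reinstantiation of the full $(k+1)$-cluster picture (every cluster still has $z+1$ points), while at $t_{p^*}^+$ exactly one cluster $C_{i^*}$ has lost a point and now has only $z$ points. (4) Lower-bound $\optkz(P(t_{p^*}^-)) \geq (cz+1)/2$ by a pigeonhole on $k$ balls covering $k+1$ clusters with $z$ outliers allowed (some ball meets two clusters, hence has diameter $\geq cz+1$), and upper-bound $\optkz(P(t_{p^*}^+)) \leq z/2$ by declaring $C_{i^*}$'s $z$ points outliers and covering each remaining cluster by a ball of radius $z/2$. (5) Combine: the algorithm reports the same solution at $t_{p^*}^-$ and $t_{p^*}^+$, so its approximation ratio at $t_{p^*}^+$ is at least $\frac{(cz+1)/2}{z/2} > c$, a contradiction; hence $|S(t)| \geq (k+1)(z+1)-1 = \Omega(kz)$. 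Since $t$ was arbitrary, this holds at all times, giving the theorem.

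The main obstacle, and the only genuinely delicate point, is step (3): making rigorous that the refresh adversary produces a window that the algorithm cannot distinguish from the ``static'' scenario in a way that forces it to carry the same $S(t_{p^*}^-)$, and in particular that the refreshing of points in $Q$ does not force the algorithm to touch the expiration time $\et(p^*)$ or create new objects that use it. This is exactly where the model assumptions do the work: since $\et(p^*)$ was already absent from $S(t)$, the ``no recovery'' rule forbids any later object from using $\et(p^*)$, so the expiry event at $t_{p^*}$ is invisible to $S$ — no object expires then — and thus $S(t_{p^*}^-) = S(t_{p^*}^+)$ and the reported solution is identical. Everything else is routine geometry in $\Reals^1$. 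I would also remark that the bound is stated for $z \geq 1$ because with $z=0$ the ``$z$ outliers'' slack vanishes and the pigeonhole on clusters still works but the construction degenerates; the stated hypothesis $z\geq 1$ sidesteps this.
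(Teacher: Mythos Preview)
Your proposal is correct and follows essentially the same approach as the paper: the same $(k+1)$-cluster configuration with $z+1$ points per cluster and gap $cz+1$, the same pigeonhole on expiration times in $\Texp$, the same refresh adversary on $Q$, and the same comparison of $\optkz$ at $t_{p^*}^-$ versus $t_{p^*}^+$ leading to a ratio exceeding~$c$. One small remark: your final sentence ``since $t$ was arbitrary, this holds at all times'' is unnecessary (and not quite accurate)---the theorem only asks for a single instance on which \alg must use $\Omega(kz)$ storage, and exhibiting this at the specific time~$t$ when the configuration is in place suffices.
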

%------------------------------------------------------------------------------------------

%------------------------------------------------------------------------------------------
\paragraph*{A lower bound for a $(1+\eps)$-approximate sketch}
%------------------------------------------------------------------------------------------
Next we present a more intricate construction that shows that a logarithmic
dependence (in the storage bound) on the spread is necessary. It also shows that
to obtain a $(1+\eps)$-approximation, the storage must depend at least linearly
on~$1/\eps$.

Let \alg be a $(1+\eps')$-approximation algorithm, for some $0<\eps'<1$,
where we assume for simplicity that $1/\eps'$ is an integer. 
(The reason we work with $\eps'$ instead of $\eps$ will become clear shortly.)
The lower-bound instance, which consists of points in~$\Reals^1$, is as follows. 

%------------------------------------------------------------------------------------------
\begin{figure}
\begin{center}
\includegraphics[scale=0.9]{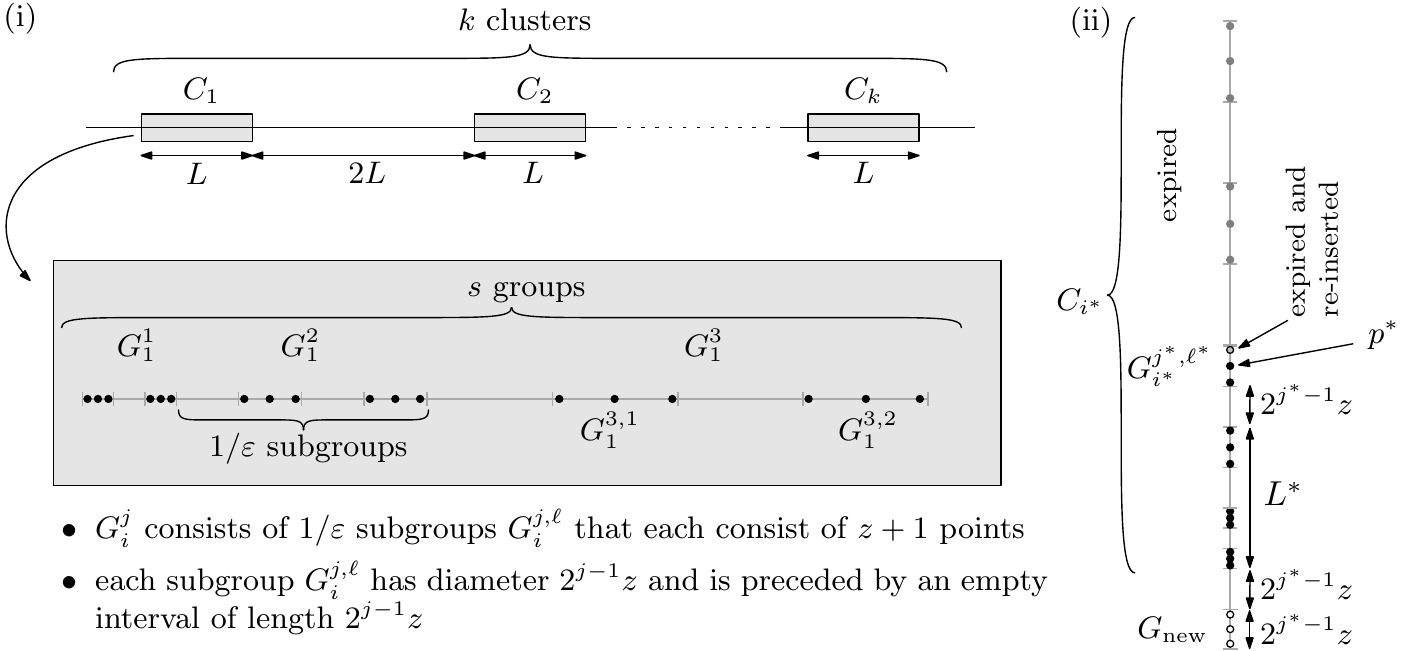}
\end{center}
\caption{The lower-bound construction for a $(1+\eps)$-approximate sketch. In the example, $s=3$,
         $k=3$, $\eps=1/2$ and $z=2$. (i) The initial configuration.
         (ii) Rotated view of the cluster containing~$p^*$.}
\label{fi:lower-bound-eps}
\end{figure}
%------------------------------------------------------------------------------------------
Let $\eps:=\eps'/8$ and consider the configuration shown in Figure~\ref{fi:lower-bound-eps}(i).
(This is the reason for starting with~$\eps'$: it allows us to
define $\eps := \eps'/8$ and describe the construction using $\eps$,
so we do not have to $\eps'$ all the time.)
The configuration consists of $k$ clusters $C_1,\ldots,C_k$, placed next to each other
from left to right. Each cluster $C_i$ consists of
$s$~groups, $G_{i}^1,\ldots,G_{i}^s$, where $s=\Theta(\log\sigma)$. Each group $G_{i}^j$
consists of $1/\eps$ subgroups, $G_{i}^{j,1},\ldots,G_{i}^{j,1/\eps}$. A subgroup~$G_{i}^{j,\ell}$
consists of $z+1$ points at distance~$2^{j-1}$ apart; the diameter of a subgroup---that
is, the distance between its leftmost and rightmost point---is thus~$2^{j-1}z$.
Subgroup~$G_{i}^{j,\ell}$ is preceded by an empty interval of length~$2^{j-1}z$.
Hence, the total diameter of the group $G_i$, including the empty interval
preceding its leftmost subgroup, is $2^{i-1}z/\eps$. (The exception is $G_1$,
for which we do not count the empty interval preceding its leftmost subgroup.)
This brings the total diameter of a cluster to 
\[
L := \sum_{j=1}^{s} 2^j z/\eps - z/\eps = (2^{s+1}-3)\cdot z/\eps < 2^{s+1}z/\eps.
\]
In between every two consecutive clusters there an an empty interval of length~$2L$.

The arrival order of the points in the configuration is as follows.
The subgroups within a cluster arrive from right to left, in a round-robin fashion
over the clusters: in the first round the subgroups 
$G_{k}^{s,\frac{1}{\eps}},\ldots,G_{1}^{s,\frac{1}{\eps}}$ arrive,
in the second round the subgroups $G_{k}^{s,\frac{1}{\eps}-1},\ldots,G_{1}^{s,\frac{1}{\eps}-1}$ arrive, and so on.
More formally, $G_{i}^{j,\ell}$ arrives before $G_{i'}^{j',\ell'}$ iff:
$j>j'$ or $(j=j' \mbox{ and } \ell>\ell')$ or $(j=j' \mbox{ and } \ell = \ell' \mbox{ and } i>i')$.
This finishes the description of the initial part of the instance. 
\medskip

Let $t$ be the time at which the last point in the configuration arrives, let $P(t)$
be the set of all points in the clusters~$C_1,\ldots,C_k$, and let $S(t)$
be the set of objects stored by~\alg. We will argue that $|S(t)| =\Omega(kzs/\eps)$, otherwise
an adversary can continue the instance such that at some time~$t'$
in the future the algorithm does not give a $(1+\eps)$-approximation.

Consider the points in the subgroups $G_{i}^{j,\ell}$ with $j>1$, 
except for the very first point that arrived. Let $\Texp$ be the set of expiration
times of these points. Note that $|\Texp|=(k(z+1)(s-1)/\eps) - 1$. Suppose at least
one of these expiration times, say the expiration time of point~$p^*$,
is not used by any object in~$S(t)$. Let 
$t_{p^*}^-$ and $t_{p^*}^+$ be the times immediately before and after~$\et(p^*)$, respectively.
The conditions of the model imply that the algorithm reports the same solution at times~$t_{p^*}^-$
and~$t_{p^*}^+$. The adversary will now continue the scenario in
such that one of these answers is not sufficiently accurate, as described next.

Let $G_{i^*}^{j^*,\ell^*}$ be the subgroup that $p^*$ belongs to.
First, the adversary waits until all points that arrived before $p^*$
have expired. The situation is then as follows:
clusters~$C_i$ with $i< i^*$ consist of the subgroups up to $G_{i}^{j^*,\ell^*}$,
while clusters $C_i$ with $i> i^*$ are missing the last of these subgroups.
Cluster $C_{i^*}$ has the points up to point~$p^*$ (which expires next) in $G_{i^*}^{j^*,\ell^*}$.
Before $p^*$ expires, the  adversary now adds a new subgroup $G_{\mathrm{new}}$ of $z+1$ points to $C_i$.
The diameter of $G_{\mathrm{new}}$ is $2^{j^*-1}z$ and its distance to~$C_{i^*}$ is $2^{j^*-1}z$ 
as well; see Figure~\ref{fi:lower-bound-eps}(ii).  (The construction
has been rotated by 90~degrees in the figure, to make it fit.)
In addition, the adversary adds the points from  $G_{i^*}^{j^*,\ell^*}$
that have already expired back. 
\medskip

We now analyze $\optkz(P(t^-_{p^*}))$. Since the distance between
any two of the $k$ clusters is larger than the diameter of the clusters, 
and any cluster still contains at least $z+1$ points---the latter follows because $j^*>1$---an
optimal solution will use a separate ball for each cluster. The largest cluster is
$C_{i^*}$, because $G_{\mathrm{new}}$ was added to it. Since each subgroup in $C_{i^*}$ has $z+1$
points and we can designate only $z$ points as outliers, it is optimal to designate
the $z$ rightmost points (topmost in Figure~\ref{fi:lower-bound-eps}(ii)) 
from $G_{i^*}^{j^*,\ell^*}$ as outliers and cover the remaining
points with a ball of diameter $L^* + 3\cdot 2^{j^*-1}z$,
where $L^*$ denotes the diameter of the union of the subgroups $G_{i^*}^{1,1}$ up to but excluding $G_{i^*}^{j^*,\ell^*}$; see Figure~\ref{fi:lower-bound-eps}.
Here the term $3\cdot 2^{j^*-1}z$ is needed because we must cover $G_{\mathrm{new}}$ as well as the leftmost (bottom-most in the figure) point from~$G_{i^*}^{j^*,\ell^*}$, which has not been designated as outlier. 
We have
\[
L^* := \sum_{j=1}^{j^*-1} 2^j z/\eps -z/\eps +  2\ell^*\cdot 2^{j^*-1} z 
     =  \left( 2^{j^*}-3 \right) z/\eps + \ell^* 2^{j^*}z
     <  \left( 2^{j^*+1} \right) z/\eps,
\]
where the last inequality uses that~$\ell^* \leq 1/\eps$.
Hence,  $\optkz(P(t^-_{p^*}))=(L^* + 3\cdot 2^{j^*-1}z)/2$.

At time $t_{p^*}$ the point $p^*$ expires. Hence, at time~$t_{p^*}^+$
the subgroup~$G_{i^*}^{j^*,\ell^*}$ has only $z$ points, which can all be
designated as outliers. Hence, $C_{i^*}$ minus the outliers can be covered by a ball
of diameter $L^* + 2\cdot 2^{j^*-1}z$ and so $\optkz(P(t^+_{p^*}))=(L^* + 2\cdot 2^{j^*-1}z)/2$.

Since the algorithm must report a valid solution at time~$t_{p^*}^-$ and it
must report the same solution at time~$t_{p^*}^+$, the approximation ratio 
at time~$t_{p^*}^+$ is at least
\[
\frac{\optkz(P(t^-_{p^*}))}{\optkz(P(t^+_{p^*}))} 
    \geq \frac{L^*+3\cdot 2^{j^*-1}z}{L^* + 2\cdot 2^{j^*-1}z} 
      =  1 + \frac{2^{j^*-1}z}{L^*+2^{j^*}z}
      > 1 + \frac{2^{j^*-1}z}{2^{j^*+2}z/\eps}
      = 1 + \eps/8
\]
We conclude that any algorithm that gives a
$(1+\eps')$-approximation must store at least $k(z+1)(s-1)/(8\eps')$
objects in the worst case.

The spread of the point set used in the construction, including
the subgroup $G_{\mathrm{new}}$, is\footnote{In our definition of spread,
we are allowed to reuse points, since $\dmin=0$ is defined as the minimum
distance between distinct points in the stream. We can also avoid reusing points in our lower-bound scenario, without asymptotically influencing the spread of the points.}
\[
\sigma = kL + (k-1)2L + 2^{j^*}z < 3kL < 3k \cdot 2^{s+1}z/\eps. 
\]
If the spread is not too small, namely when $\sigma\geq (3kz/\eps)^2$,
we have $s\geq \frac{1}{2} \log \sigma - 1$.  We obtain the following theorem.
%------------------------------------------------------------------------------------------
\begin{theorem}\label{th:lower-bound-eps}
Let $0<\eps<1$ and let \alg be a $(1+\eps)$-approximation algorithm for the $k$-center problem with
$z$ outliers in $\Reals^1$, with $z\geq 1$, that works in the model described above.
For any spread~$\sigma \geq (3kz/\eps)^2$, there is a problem instance of spread~$\sigma$ 
that requires \alg to store $\Omega((kz\log\sigma)/\eps)$ points. 
\end{theorem}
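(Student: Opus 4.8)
The plan is to establish Theorem~\ref{th:lower-bound-eps} by essentially packaging the construction and analysis already sketched in the preceding paragraphs, where almost all of the quantitative work has been done. The overall strategy is an adversary argument: build a fixed \emph{initial configuration} of $n := k(z+1)s/\eps$ points in $\Reals^1$ (with $\eps := \eps'/8$, writing $\eps'$ for the true approximation parameter), let the stream arrive in the prescribed right-to-left, round-robin order, and examine the algorithm's state $S(t)$ at the time $t$ when the last point has arrived. By a pigeonhole argument, if $|S(t)|$ is too small, then some expiration time in the designated set $\Texp$ is \emph{not used} by any stored object; we then continue the stream so that the disparity between the optimal values just before and just after this expiration time forces the reported (and, by the model, unchanged) solution to violate the $(1+\eps')$-approximation guarantee.

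The key steps, in order, are as follows. \textbf{Step 1:} Fix the parameters. Set $\eps := \eps'/8$, choose $s := \Theta(\log\sigma)$ large enough that the spread of the final configuration equals the prescribed $\sigma$; specifically, since $\sigma = kL + (k-1)2L + 2^{j^*}z < 3k\cdot 2^{s+1}z/\eps$ and $\sigma\geq (3kz/\eps)^2$, this forces $s\geq \tfrac{1}{2}\log\sigma - 1$, so $s=\Theta(\log\sigma)$ is achievable. \textbf{Step 2:} Describe the initial configuration exactly as in the excerpt — $k$ clusters, each with $s$ groups, each group with $1/\eps$ subgroups, each subgroup $z+1$ equally spaced points with the spacings and empty gaps as specified — and fix the arrival order via the stated lexicographic rule on $(j,\ell,i)$. \textbf{Step 3:} Apply pigeonhole: $|\Texp| = k(z+1)(s-1)/\eps - 1$, so if $|S(t)| < |\Texp|$ (say $|S(t)| \leq k(z+1)(s-1)/(8\eps') - 1$ after translating back to $\eps'$), some point $p^*\in G_{i^*}^{j^*,\ell^*}$ with $j^*>1$ has $\et(p^*)\notin\{\text{expiration times used by }S(t)\}$. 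Invoke the model's ``once discarded, cannot be recovered'' axiom and the determinism axiom to conclude the algorithm reports \emph{the same} solution at $t_{p^*}^-$ and $t_{p^*}^+$. \textbf{Step 4:} Run the adversary continuation: wait for all points arriving before $p^*$ to expire, then insert the new subgroup $G_{\mynew}$ of $z+1$ points next to $C_{i^*}$ at the prescribed distance $2^{j^*-1}z$, and re-insert the already-expired points of $G_{i^*}^{j^*,\ell^*}$. \textbf{Step 5:} Compute $\optkz$ at both times. Because inter-cluster gaps exceed intra-cluster diameters and every cluster still has $\geq z+1$ points (using $j^*>1$), the optimum uses one ball per cluster; the binding cluster is $C_{i^*}$. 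Evaluating the geometric sums gives $L^* = (2^{j^*}-3)z/\eps + \ell^* 2^{j^*}z < 2^{j^*+1}z/\eps$ (using $\ell^*\leq 1/\eps$), hence $\optkz(P(t_{p^*}^-)) = (L^* + 3\cdot 2^{j^*-1}z)/2$ and $\optkz(P(t_{p^*}^+)) = (L^* + 2\cdot 2^{j^*-1}z)/2$. \textbf{Step 6:} Bound the ratio:
\[
\frac{\optkz(P(t^-_{p^*}))}{\optkz(P(t^+_{p^*}))}
 = 1 + \frac{2^{j^*-1}z}{L^* + 2^{j^*}z}
 > 1 + \frac{2^{j^*-1}z}{2^{j^*+2}z/\eps}
 = 1 + \eps/8 = 1 + \eps',
\]
contradicting the assumed approximation ratio. \textbf{Step 7:} Conclude $|S(t)| = \Omega(k(z+1)s/\eps') = \Omega((kz\log\sigma)/\eps')$, and rename $\eps'$ back to $\eps$ to match the theorem statement.

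The main obstacle — and the part that genuinely needs care rather than being a rote computation — is \textbf{Step 5}, verifying that the optimal $k$-center-with-outliers solution really does behave as claimed at both instants. One must confirm two structural facts: (a) that spending one ball per cluster is forced (so no ball can ``save radius'' by merging clusters or by leaving a cluster entirely to the outlier budget), which relies on the $2L$ inter-cluster gaps dominating the cluster diameter $L$ and on every cluster retaining at least $z+1$ survivors after the waiting phase; and (b) that \emph{within} the binding cluster $C_{i^*}$ the best use of the $z$-outlier budget is exactly the $z$ rightmost points of the active subgroup $G_{i^*}^{j^*,\ell^*}$, so that the covering ball has the stated diameter — in particular that one cannot do better by discarding points from $G_{\mynew}$ or from an earlier subgroup. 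Both are intuitively clear from the geometric spacing (each subgroup has $z+1$ points, so no single subgroup can be fully removed; the subgroups' exponentially growing gaps mean the cluster's extent is dominated by its right end), but writing this cleanly is where the real content of the proof lies. Everything else — the pigeonhole count, the geometric-series evaluations, the spread computation, and the model-axiom invocations — is bookkeeping of the kind already displayed in the excerpt, and I would present it tersely, citing the model assumptions of Section~\ref{se:lower-bound} for the ``same reported solution'' step.
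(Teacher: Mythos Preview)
Your proposal is essentially the paper's own argument, step for step: the same nested cluster/group/subgroup construction, the same round-robin arrival order, the same pigeonhole on~$\Texp$, the same adversary continuation with $G_{\mynew}$ and the re-inserted points, and the same ratio computation via the bound $L^* < 2^{j^*+1}z/\eps$. The only slip is the arithmetic at the end of Step~6: with $\eps := \eps'/8$ as you set it in Step~1, one gets $\eps/8 = \eps'/64$, not $\eps'$, so the displayed chain does not actually exceed $1+\eps'$; the substitution should run the other way (take the construction parameter to be $8\eps'$, equivalently $\eps' := \eps/8$) for the contradiction to go through, but this affects only constants and not the $\Omega((kz/\eps)\log\sigma)$ conclusion.
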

%------------------------------------------------------------------------------------------

%------------------------------------------------------------------------------------------
\section{Concluding remarks}
%------------------------------------------------------------------------------------------
We presented the first algorithm for the $k$-center problem with outliers in the sliding-window model.
The algorithm works for points coming from a space of bounded doubling dimension.
Given any parameter $\eps>0$, the algorithm maintains a subset $S(t)$ of the set $P(t)$ 
of points that are currently in the window, such that $\optkz(S(t))\leq (1+\eps)\cdot\optkz(P(t))$.
The size of the set $S(t)$ is $O((kz/\eps^d)\log\sigma)$, where $\sigma$ is the spread of the points
in the stream.  We also showed that this amount of storage is optimal, except possibly for the dependency
on $\eps$, in the following sense: any deterministic $(1+\eps)$-approximation algorithm
for the 1-dimensional $k$-center problem with outliers in the sliding-window model must use
$\Omega((kz/\eps)\log\sigma)$ storage. This lower bound holds in a very general model: 
the main requirement is that the algorithm only updates its point set when a new point
arrives or at an explicitly stored expiration time of a previously seen point.

It would be interesting to investigate the dependency on the parameter $\eps$ in
more detail and close the gap between our upper and lower bounds.
A main question here is whether it is possible to develop a sketch whose storage is
only polynomially dependent on the doubling dimension~$d$.
Another interesting direction is to explore randomized algorithms.
Can we beat the above-mentioned lower bound using randomization?

\bibliography{references}

\begin{thebibliography}{10}

\bibitem{DBLP:journals/dcg/AgarwalHY08}
Pankaj~K. Agarwal, Sariel Har{-}Peled, and Hai Yu.
\newblock Robust shape fitting via peeling and grating coresets.
\newblock {\em Discret. Comput. Geom.}, 39(1-3):38--58, 2008.

\bibitem{DBLP:journals/algorithmica/AgarwalS15}
Pankaj~K. Agarwal and R.~Sharathkumar.
\newblock Streaming algorithms for extent problems in high dimensions.
\newblock {\em Algorithmica}, 72(1):83--98, 2015.

\bibitem{DBLP:journals/pvldb/CeccarelloPP19}
Matteo Ceccarello, Andrea Pietracaprina, and Geppino Pucci.
\newblock Solving k-center clustering (with outliers) in mapreduce and
  streaming, almost as accurately as sequentially.
\newblock {\em Proc. {VLDB} Endow.}, 12(7):766--778, 2019.

\bibitem{cp-sdameb-14}
Timothy~M. Chan and Vinayak Pathak.
\newblock Streaming and dynamic algorithms for minimum enclosing balls in high
  dimensions.
\newblock {\em Comput. Geom.}, 47(2):240--247, 2014.

\bibitem{cs-gosw-06}
Timothy~M. Chan and Bashir~S. Sadjad.
\newblock Geometric optimization problems over sliding windows.
\newblock {\em Int. J. Comput. Geom. Appl.}, 16(2-3):145--158, 2006.

\bibitem{DBLP:journals/siamcomp/CharikarCFM04}
Moses Charikar, Chandra Chekuri, Tom{\'{a}}s Feder, and Rajeev Motwani.
\newblock Incremental clustering and dynamic information retrieval.
\newblock {\em {SIAM} J. Comput.}, 33(6):1417--1440, 2004.

\bibitem{charikar2001algorithms}
Moses Charikar, Samir Khuller, David~M Mount, and Giri Narasimhan.
\newblock Algorithms for facility location problems with outliers.
\newblock In {\em Proc.~12th Annual ACM-SIAM symposium on Discrete Algorithms
  (SODA)}, pages 642--651. Society for Industrial and Applied Mathematics,
  2001.

\bibitem{DBLP:conf/stoc/CharikarOP03}
Moses Charikar, Liadan O'Callaghan, and Rina Panigrahy.
\newblock Better streaming algorithms for clustering problems.
\newblock In {\em Proc.~35th Annual {ACM} Symposium on Theory of Computing
  (STOC)}, pages 30--39, 2003.

\bibitem{DBLP:conf/icalp/Cohen-AddadSS16}
Vincent Cohen{-}Addad, Chris Schwiegelshohn, and Christian Sohler.
\newblock Diameter and k-center in sliding windows.
\newblock In {\em Proc.~43rd International Colloquium on Automata, Languages,
  and Programming, (ICALP)}, volume~55 of {\em LIPIcs}, pages 19:1--19:12,
  2016.

\bibitem{DBLP:conf/esa/DingYW19}
Hu~Ding, Haikuo Yu, and Zixiu Wang.
\newblock Greedy strategy works for k-center clustering with outliers and
  coreset construction.
\newblock In {\em Proc.~27th Annual European Symposium on Algorithms (ESA)},
  volume 144 of {\em LIPIcs}, pages 40:1--40:16, 2019.

\bibitem{DBLP:journals/algorithmica/FeigenbaumKZ04}
Joan Feigenbaum, Sampath Kannan, and Jian Zhang.
\newblock Computing diameter in the streaming and sliding-window models.
\newblock {\em Algorithmica}, 41(1):25--41, 2005.

\bibitem{DBLP:journals/comgeo/HatamiZ17}
Behnam Hatami and Hamid Zarrabi{-}Zadeh.
\newblock A streaming algorithm for 2-center with outliers in high dimensions.
\newblock {\em Comput. Geom.}, 60:26--36, 2017.

\bibitem{DBLP:conf/approx/McCutchenK08}
Richard~Matthew McCutchen and Samir Khuller.
\newblock Streaming algorithms for k-center clustering with outliers and with
  anonymity.
\newblock In {\em Proc.~11th and 12th International Workshop on Approximation,
  Randomization and Combinatorial Optimization (APPROX and RANDOM)}, volume
  5171 of {\em Lecture Notes in Computer Science}, pages 165--178, 2008.

\bibitem{z-cpa-08}
Hamid Zarrabi{-}Zadeh.
\newblock Core-preserving algorithms.
\newblock In {\em Proceedings of the 20th Annual Canadian Conference on
  Computational Geometry (CCCG)}, 2008.

\bibitem{DBLP:conf/cccg/Zarrabi-ZadehM09}
Hamid Zarrabi{-}Zadeh and Asish Mukhopadhyay.
\newblock Streaming 1-center with outliers in high dimensions.
\newblock In {\em Proc.~21st Annual Canadian Conference on Computational
  Geometry (CCCG)}, pages 83--86, 2009.

\end{thebibliography}

\end{document}